\documentclass[11pt]{article}


\usepackage[margin=1in]{geometry}

\DeclareMathAlphabet{\mathbbold}{U}{bbold}{m}{n}
\usepackage{xspace,amsmath,amsfonts,amssymb,amsthm}
\usepackage{mathtools}
\usepackage[usenames,dvipsnames,svgnames,table]{xcolor}
\usepackage{thm-restate}

\usepackage[pagebackref]{hyperref}
\hypersetup{
    pdftitle={}, 
    pdfauthor={}, 
    colorlinks=true, 
    linkcolor=blue, 
    citecolor=blue, 
    urlcolor=blue 
}

\renewcommand{\backref}[1]{}

\renewcommand{\backrefalt}[4]{%
\ifcase #1 %
\or
[p.\ #2]%
\else
[pp.\ #2]%
\fi}

\usepackage{tikz,tikz-qtree}
\usepackage{wrapfig}
\usepackage{enumitem}

\usepackage{tocloft}

\makeatletter
\newcommand*\rel@kern[1]{\kern#1\dimexpr\macc@kerna}
\newcommand*\widebar[1]{%
  \begingroup
  \def\mathaccent##1##2{%
    \rel@kern{0.8}%
    \overline{\rel@kern{-0.8}\macc@nucleus\rel@kern{0.2}}%
    \rel@kern{-0.2}%
  }%
  \macc@depth\@ne
  \let\math@bgroup\@empty \let\math@egroup\macc@set@skewchar
  \mathsurround\z@ \frozen@everymath{\mathgroup\macc@group\relax}%
  \macc@set@skewchar\relax
  \let\mathaccentV\macc@nested@a
  \macc@nested@a\relax111{#1}%
  \endgroup
}
\makeatother

\usepackage{microtype}


\let\oldproofname=\proofname
\renewcommand{\proofname}
    {\upshape\bfseries{\oldproofname}}



\newtheorem{theorem}{Theorem}
\newtheorem{definition}[theorem]{Definition}

\newtheorem{lemma}[theorem]{Lemma}
\newtheorem{conjecture}[theorem]{Conjecture}
\newtheorem{claim}[theorem]{Claim}

  {\innerthmnum}
  {\endinnerthmnum}


\newcommand{\eq}[1]{\hyperref[eq:#1]{(\ref*{eq:#1})}}
\renewcommand{\sec}[1]{\hyperref[sec:#1]
    {Section~\ref*{sec:#1}}}
\newcommand{\thm}[1]{\hyperref[thm:#1]
    {Theorem~\ref*{thm:#1}}}
\newcommand{\lem}[1]{\hyperref[lem:#1]{Lemma~\ref*{lem:#1}}}
\newcommand{\clm}[1]{\hyperref[claim:#1]{Claim~\ref*{claim:#1}}}

\newcommand{\prop}[1]{\hyperref[prop:#1]
    {Proposition~\ref*{prop:#1}}}
\newcommand{\cor}[1]{\hyperref[cor:#1]
    {Corollary~\ref*{cor:#1}}}
\newcommand{\fig}[1]{\hyperref[fig:#1]{Figure~\ref*{fig:#1}}}
\newcommand{\tab}[1]{\hyperref[tab:#1]{Table~\ref*{tab:#1}}}
\newcommand{\alg}[1]{\hyperref[alg:#1]
    {Algorithm~\ref*{alg:#1}}}
\newcommand{\app}[1]{\hyperref[app:#1]
    {Appendix~\ref*{app:#1}}}
\newcommand{\conj}[1]{\hyperref[conj:#1]
    {Conjecture~\ref*{conj:#1}}}

\newcommand{\Goos}{{G\"o\"os}\xspace}

\newcommand{\B}{\{0,1\}}
\newcommand{\OR}{\textsc{OR}}

\newcommand{\BSum}{\textsc{BSum}}
\newcommand{\BKK}{\textsc{BKK}}
\newcommand{\CS}{\textsc{CS}}
\newcommand{\tOmega}{\tilde{\Omega}}
\newcommand{\tO}{\tilde{O}}
\DeclareMathOperator{\D}{D}
\DeclareMathOperator{\R}{R}
\DeclareMathOperator{\Q}{Q}
\DeclareMathOperator{\C}{C}
\DeclareMathOperator{\UC}{UC}
\DeclareMathOperator{\RC}{RC}

\DeclareMathOperator{\bs}{bs}
\DeclareMathOperator{\ts}{ts}
\DeclareMathOperator{\s}{s}
\DeclareMathOperator{\supp}{supp}

\DeclareMathOperator{\avdeg}{avdeg}
\DeclareMathOperator{\adeg}{\widetilde{\deg}}
\DeclareMathOperator{\aavdeg}{\widetilde{\avdeg}}

\DeclareMathOperator{\poly}{poly}

\newcommand{\UCmin}{\UC_{\min}}
\newcommand{\degp}{\deg^{+}}
\newcommand{\degpmin}{\deg^{+}_{\min}}
\newcommand{\adegp}[1]{\adeg^{+#1}}
\newcommand{\adegpmin}[1]{\adeg^{+#1}_{\min}}
\newcommand{\avdegp}{\avdeg^{+}}
\newcommand{\avdegpmin}{\avdeg^{+}_{\min}}

\newcommand{\aavdegpmin}[1]{\aavdeg^{+#1}_{\min}}

\def\RR{{\mathbb{R}}}

\def\N{{\mathbb{N}}}

\def\Q{{\mathrm{Q}}}

\renewcommand{\Pr}{\mathop{\bf Pr\/}}

\def\poly{{\mathrm{poly}}}
\def\OR{{\mathrm{OR}}}

\def\({\left(}
\def\){\right)}

\newcommand{\remove}[1]{}

\newcommand{\mathify}[1]{\ifmmode{#1}\else\mbox{$#1$}\fi}

\newcommand{\complexityclass}[1]{{\bf{#1}}\xspace}


\newcommand{\UP}{\complexityclass{UP}}
\newcommand{\CONP}{\complexityclass{coNP}}

\newcommand{\CC}{\mathrm{C}}

\newcommand{\UPcc}{\UP^{cc}}
\newcommand{\coNPcc}{\CONP^{cc}}

\def\B{{\{0,1\}}}

\newcommand{\bool}{\{0,1\}}



\title{Low-Sensitivity Functions from Unambiguous Certificates}
\author{Shalev Ben-David\thanks{
Partially supported by NSF.}\\
\small University of Maryland\\
\texttt{shalev@umd.edu} \and Pooya Hatami\thanks{Partially supported by the National Science Foundation under agreement No. CCF-1412958. 
}\\ \small DIMACS \& IAS \\  \texttt{pooyahat@math.ias.edu} 
\and Avishay Tal\thanks{Supported by the Simons Collaboration on Algorithms and Geometry, and by the National Science Foundation grant No. CCF-1412958.}
\\ \small IAS \\  \texttt{avishay.tal@gmail.com} 
}
\date{}

\begin{document}

\maketitle

\begin{abstract}
    We provide new query complexity separations against
    sensitivity for total Boolean functions:
    a power $3$ separation between deterministic
    (and even randomized or quantum)
    query complexity and sensitivity,
    and a power $2.22$ separation between certificate
    complexity and sensitivity.
    We get these separations by using a new connection between
    sensitivity and a seemingly unrelated measure called
    one-sided unambiguous certificate complexity ($\UCmin$).
    We also show that $\UCmin$ is lower-bounded by
    fractional block sensitivity,
    which means we cannot use these techniques to get
    a super-quadratic separation between $\bs(f)$ and $\s(f)$. We also provide a quadratic separation between the tree-sensitivity and decision tree complexity of Boolean functions, disproving a conjecture of \cite{GSTW16}.
    
 Along the way,
 we give a power $1.22$ separation between certificate complexity and one-sided unambiguous certificate complexity, improving the power $1.128$ separation due to \Goos~\cite{Goos15}. As a consequence, we obtain an improved $\Omega(\log^{1.22} n)$ lower-bound on the co-nondeterministic communication complexity of the Clique vs. Independent Set problem.
\end{abstract}
\thispagestyle{empty}
\newpage
\pagenumbering{arabic}

\section{Introduction}

Sensitivity is one of the simplest complexity measures
of a Boolean function. For $f:\B^n\to\B$ and $x\in\B^n$,
the sensitivity of $x$ is the number of bits of $x$
that, when flipped, change the value of $f(x)$.
The sensitivity of $f$, denoted $\s(f)$, is the maximum
sensitivity of any input $x$ to $f$. Sensitivity
lower bounds other important measures in
query complexity, such as deterministic query complexity
$\D(f)$, randomized query complexity $\R(f)$,
certificate complexity $\C(f)$, and block sensitivity $\bs(f)$
(see \sec{prelim} for definitions). $\sqrt{\s(f)}$ is
 a lower bound on quantum query complexity $\Q(f)$.

Despite its simplicity, sensitivity has remained mysterious.
The other measures are polynomially related to each other:
we have $\bs(f)\leq\C(f)\leq\D(f)\leq\bs(f)^3$ and
$\Q(f)\leq\R(f)\leq\D(f)\leq\Q(f)^6$. In contrast,
no polynomial relationship connecting
sensitivity to these measures is known, despite much interest (this problem was first posed by \cite{Nis91}. For a survey, see
\cite{HKP11}. For recent progress, see
\cite{AS11,Bop12,AP14,ABG+14,APV15,AV15,GKS15,Sze15,GNS+16,GSTW16,Tal16}).

Until recently, the best known separation between
sensitivity and any of these other measures was quadratic.
Tal \cite{Tal16} showed a powerd $2.11$ separation between
$\D(f)$ and $\s(f)$. In this work, we improve this
to a power $3$ separation, and also show functions for which
$\Q(f)=\tOmega(\s(f)^3)$ and $\C(f)=\tOmega(\s(f)^{2.22})$.

We do this by exploiting a new connection
between sensitivity and a measure called one-sided
unambiguous certificate complexity,
which we denote by $\UCmin(f)$. This measure, and particularly
its two-sided version $\UC(f)$
(which is sometimes called subcube complexity),
has received significant attention in previous
work (e.g.\ \cite{BOH90,FKW02,Sav02,Bel06,
KRS15,GPW15,Goos15,GJPW15,CKLS16,AKK16}),
in part because it corresponds
to partition number in communication complexity.
Intuitively,
$\UCmin(f)$ is similar to (one-sided) certificate complexity,
except that the certificates are required to be
\emph{unambiguous}:
each input must be consistent with only one certificate.
For a formal definition, see \sec{uc}.

We prove the following theorem.

\begin{restatable}{theorem}{main}\label{thm:main}
For any $\alpha\in\RR^+$,
if there is a family of functions with
$\D(f)=\tOmega(\UCmin(f)^{1+\alpha})$,
then there is a family of functions with
$\D(f)=\tOmega(\s(f)^{2+\alpha})$.
The same is true if we replace $\D(f)$ by
$\bs(f),\RC(f), \C(f),\R(f),\Q(f),$ and many other measures.
\end{restatable}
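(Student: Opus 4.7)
Let $u := \UCmin(f)$, so the hypothesis reads $M(f) \gtrsim u^{1+\alpha}$ for the measure $M$ at hand. I would construct a new function $g$ from $f$ satisfying (i) $M(g) \geq \tOmega(M(f) \cdot u)$ and (ii) $\s(g) \leq O(u)$. Combining (i) and (ii) with the hypothesis gives immediately
\[ M(g) \;\gtrsim\; M(f) \cdot u \;\gtrsim\; u^{2+\alpha} \;\gtrsim\; \s(g)^{2+\alpha}, \]
the desired $(2+\alpha)$-power separation. So the proof reduces entirely to producing such a $g$.

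For step (i), I would rely on standard composition theorems for query measures, which yield $M(F \circ h) \gtrsim M(F) \cdot M(h)$ (up to polylogarithmic factors) for $M \in \{\D, \bs, \RC, \C, \R, \Q\}$ — using Reichardt's adversary composition for $\Q$ and the more recent randomized composition results for $\R$. Taking $h$ to be a small explicit gadget with $M(h) = \Theta(u)$ simultaneously on all these measures (a Rubinstein-like construction on $\poly(u)$ variables works), we get the $u$-fold amplification uniformly for all $M$ at once; this is why the conclusion of the theorem applies to such a diverse list of measures.

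The subtle part is step (ii), the sensitivity bound. The naive composition bound $\s(f \circ h) \leq \s(f) \cdot \s(h)$ is not sufficient: $\s(f)$ can be much larger than $\UCmin(f)$ — for instance, $\s(\OR_n) = n$ but $\UCmin(\OR_n) = 1$ — so $g$ cannot be a plain composition $f \circ h$. Instead, it must be a lift that exploits the unambiguous partition structure of $f$'s certificates. A natural candidate is a cheat-sheet-style construction where each input to $g$ explicitly carries ``pointer'' information designating, \emph{unambiguously}, which certificate of $f$ certifies the output; any sensitive bit of $g$ is then forced to lie inside the designated certificate of size $u$.

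\textbf{Main obstacle.} The crux is to design the lift so that sensitive bits are rigidly confined to the unique unambiguous certificate. This requires carefully coupling the ``hash / pointer'' part of the input (which selects the certificate) with the certificate bits themselves, since otherwise a flip in the hash could propagate and create extra sensitivity not accounted for by $u$. This is where the \emph{unambiguity} of the certificate partition is essential: it ensures that the pointer is a well-defined function of the input, which would fail if overlapping certificates could both be consistent. Verifying that this coupling behaves correctly, together with establishing the composition theorems for each measure $M$ (particularly $\R$ and $\Q$, where composition is most delicate), is the main technical burden of the proof.
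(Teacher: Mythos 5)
Your high-level plan — produce $g$ with $M(g)\gtrsim M(f)\cdot u$ and $\s(g)=O(u)$, where $u=\UCmin(f)$ — is exactly the right reduction, and you correctly identify the obstruction (plain composition $f\circ h$ can have $\s(f\circ h)$ dominated by $\s(f)$, which is uncontrolled). But your proposed fix has the composition \emph{backwards}, and the actual mechanism is missing. The paper does not place $f$ (or a lifted $f$) as the \emph{outer} function over a gadget $h$. It does the opposite: it modifies $f$ into a ``desensitized'' function $f'$ and places $f'$ as the \emph{inner} function under a carefully chosen outer function ($\OR_{3u}$, or Block-$k$-Sum for the quantum case). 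Concretely, $f'(xyz)=1$ iff $f(x)=f(y)=f(z)=1$ and all three inputs have the \emph{same} unambiguous $1$-certificate; this tripling forces $\s_0(f')=1$ (a $0$-input can have at most one sensitive bit, because two of the three copies already pin down the certificate $p$ and the third copy must disagree with $p$ in exactly one coordinate), while $\s_1(f')\leq\UC_1(f')\leq 3u$ and $\D(f')\geq\D(f)$. Then for $\hat f=\OR_{3u}\circ f'$, a $0$-input has all $3u$ inner copies at $0$, giving sensitivity $\leq 3u\cdot\s_0(f')=3u$, and a $1$-input has sensitivity $\leq\s_1(f')\leq 3u$ (OR contributes nothing once two copies are $1$), while $\D(\hat f)=3u\cdot\D(f')\geq 3u\cdot u^{1+\alpha}$. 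Your ``cheat-sheet-style pointer'' idea is a vague gesture at the unambiguity structure but never controls $\s_0$, and the subsequent composition with an inner gadget $h$ would re-inflate the sensitivity (to roughly $\s(g)\cdot\s(h)$), destroying the bound you need.

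There is a second gap you underweight: the quantum case cannot be handled by ``a Rubinstein-like gadget with $M(h)=\Theta(u)$ simultaneously for all $M$,'' because $\Q(\OR_n)=\Theta(\sqrt n)$ — no total gadget on $\poly(u)$ bits gives quantum amplification by a factor of $u$ via the naive composition theorem while keeping sensitivity at $O(u)$. The paper handles $\Q$ separately, replacing the outer $\OR$ with the Block-$k$-Sum function of \cite{ABK15}, which has $\Q(\BSum_n)=\tOmega(n)$ \emph{and} the special property that every input has a certificate with only $O(\log^3 n)$ ones; the latter is what keeps the sensitivity of $\BSum_{u}\circ f'$ at $\tO(u)$. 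Without an outer function enjoying that kind of one-sided certificate sparsity, the quantum part of the statement does not follow from your sketch.
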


\thm{main} can be generalized from sensitivity $\s(f)$
to bounded-size block sensitivity $\bs_{(k)}(f)$
(block sensitivity where each block is restricted to have
size at most $k$). However, there is a constant factor loss
that depends on $k$.

We observe that cheat sheet functions
(as defined in \cite{ABK15}) have low
$\UCmin$; in particular, one of the functions in
\cite{ABK15} already has a quadratic separation between
$\Q(f)$ and $\UCmin(f)$, giving a cubic separation
between $\Q(f)$ and $\s(f)$. 

\begin{restatable}{corollary}{cormaina}\label{cor:main1}
There is a family of functions with $\Q(f)=\tOmega(\s(f)^3)$.
\end{restatable}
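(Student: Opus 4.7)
The plan is a two-step reduction. First, I would locate a function family exhibiting a quadratic separation $\Q(f) = \tOmega(\UCmin(f)^{2})$. Second, I would feed this family into Theorem \ref{thm:main} with $\alpha = 1$ (in its $\Q$-form) to obtain the desired cubic separation.

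For the first step, I would invoke the cheat sheet construction of \cite{ABK15}. The key structural observation is that any cheat sheet function $F_{\CS}$ built from an outer function $F$ admits very efficient unambiguous certificates on the 1-side: a 1-input contains a canonical "cheat sheet row" that directly encodes $n$ certificates for the $n$ copies of $F$, and publishing that row together with the pointed-to positions yields a certificate of total size $\tO(n \cdot \C(F))$. Because the row index is itself a deterministic function of the input's outer-instance outputs (which in turn are verified by the certificates in the row), the certificate is canonical and hence unambiguous, giving $\UCmin(F_{\CS}) = \tO(n \cdot \C(F))$. For the specific cheat sheet function from \cite{ABK15} referenced in the paragraph preceding the corollary, the matching quantum lower bound $\Q(F_{\CS}) = \tOmega((n \cdot \C(F))^2)$ is known, yielding the quadratic $\Q$-vs-$\UCmin$ gap.

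For the second step, given the family $\{F_{\CS}\}$ with $\Q(F_{\CS}) = \tOmega(\UCmin(F_{\CS})^{1+1})$, I apply the $\Q$-version of Theorem \ref{thm:main} with $\alpha = 1$. The theorem constructs (from this input family) a new family of Boolean functions $g$ satisfying $\Q(g) = \tOmega(\s(g)^{2+1}) = \tOmega(\s(g)^{3})$, which is exactly the statement of Corollary \ref{cor:main1}.

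The main obstacle I anticipate is the first step: confirming that the chosen cheat sheet function has genuine one-sided \emph{unambiguous} certificate complexity (not merely low ordinary certificate complexity), and that the parameter balance between $n$, $\C(F)$, and $\Q(F)$ in the ABK construction produces a clean quadratic gap after accounting for polylogarithmic factors. Unambiguity is the more delicate of these — one must verify that no two distinct 1-inputs produce overlapping-but-inequivalent canonical certificates, which reduces to the fact that the pointed-to positions in a cheat sheet row uniquely determine the $n$ outer-instance outputs and hence the row index itself. Once this is settled, the rest of the proof is an immediate invocation of Theorem \ref{thm:main}.
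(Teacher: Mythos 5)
Your proposal matches the paper's proof exactly: apply \thm{main} with $\alpha=1$ to the cheat sheet function $\BKK_\CS$ of \cite{ABK15}, which has $\Q(f)=\tOmega(\UCmin(f)^2)$. The only difference is that where you sketch the argument that the canonical cheat-sheet-row certificates are unambiguous and small, the paper simply cites \cite{AKK16} for that fact; your sketch of why the certificate is canonical (the pointed-to positions determine the outer-instance outputs, hence the row index, hence the whole certificate) is the right intuition for what \cite{AKK16} establishes.
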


We observe that combining an intermediate step of our proof of Theorem~\ref{thm:main} and the construction of \cite{ABK15} gives a quadratic separation between tree sensitivity ($\ts(f)$) and $\Q(f)$ disproving a conjecture of \cite{GSTW16}. We defer the definition of tree sensitivity to Section~\ref{sec:s-bs-ts}. 

\begin{restatable}{corollary}{cortree}\label{cortree}
There is a family of functions with $\Q(f)=\tOmega(\ts(f)^2)$. 
\end{restatable}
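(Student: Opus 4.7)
The plan is to combine two ingredients. The first is an intermediate inequality that arises naturally in the proof of Theorem~\ref{thm:main}, namely a bound of the form $\ts(f) = \tO(\UCmin(f))$ for every total Boolean function. The second is the cheat sheet function of \cite{ABK15} discussed just before Corollary~\ref{cor:main1}, which has $\Q(f) = \tOmega(\UCmin(f)^2)$. Applying the first bound to this same function gives
\[
\Q(f) \;=\; \tOmega(\UCmin(f)^2) \;=\; \tOmega(\ts(f)^2),
\]
which is the desired quadratic separation and refutes the conjecture of \cite{GSTW16}.

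To prove the intermediate bound $\ts(f) = \tO(\UCmin(f))$, I would fix unambiguous $0$- and $1$-certificate covers $\mathcal{C}_0, \mathcal{C}_1$ of minimum width $w := \UCmin(f)$, and control the depth of an arbitrary sensitivity tree $T$. At any node of $T$ labelled by an input $y$, the unique certificate in $\mathcal{C}_{f(y)}$ containing $y$ fixes at most $w$ coordinates, and every sensitive coordinate of $y$ must lie among these $w$: flipping any other coordinate keeps $y$ inside the same certificate and hence cannot change $f(y)$. This caps the branching factor at $w$. To cap the depth along any root-to-leaf path, I would argue that the sequence of certificates in $\mathcal{C}_0 \cup \mathcal{C}_1$ consistent with the inputs encountered along the path cannot repeat in a way that would let the path extend indefinitely; unambiguity forces the path to make genuine progress in certificate space, capping its length at $\tO(w)$. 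Once this intermediate bound is in hand, substituting the cheat sheet construction of \cite{ABK15} immediately yields Corollary~\ref{cortree}.

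The main obstacle I foresee is the depth bound itself. The branching argument above is straightforward, but in a sensitivity tree each flipped bit takes us to a new input whose defining certificate may fix entirely different coordinates, so a priori the depth could be much larger than $w$. The key leverage is unambiguity, which ensures that once a path leaves a given certificate it cannot silently re-enter it without a corresponding change elsewhere; formalising this intuition, and absorbing the polylogarithmic slack that inevitably appears when converting between $\UCmin$ and related measures used inside the proof of Theorem~\ref{thm:main}, is where the real work lies.
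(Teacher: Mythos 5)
The approach diverges from the paper's at a crucial step, and the divergence leaves a genuine gap. The paper does not prove anything like your proposed intermediate bound $\ts(f)=\tO(\UCmin(f))$ for arbitrary total $f$; instead it first applies the \emph{desensitizing transformation} (Definition~\ref{def:desen}) to the cheat sheet function to get $f'$ with $\s_0(f')=1$, while preserving $\Q(f')\ge\Q(f)$ and keeping $\UC_1(f')\le 3\UCmin(f)$, hence $\s_1(f')\le 3\UCmin(f)$. The whole point of forcing $\s_0(f')=1$ is exactly to kill the depth problem you identify: when every $0$-input has at most one sensitive neighbour, any sensitive tree is a star (a $1$-input at the centre surrounded by $0$-leaves), so $\ts(f')=\s(f')$ and the bound $\Q(f')=\tOmega(\s(f')^2)=\tOmega(\ts(f')^2)$ follows immediately. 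Your plan, by contrast, relies on the unproved claim that unambiguity alone caps sensitive-tree depth at $\tO(\UCmin(f))$ for the original function, and you yourself flag that you do not know how to prove it. Since the paper goes out of its way to engineer $\s_0=1$ rather than prove your lemma, you should treat the lemma as genuinely open at best.

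There is also a smaller flaw in the branching argument. You take $\mathcal{C}_0,\mathcal{C}_1$ to both have width $w=\UCmin(f)$, but $\UCmin(f)=\min\{\UC_0(f),\UC_1(f)\}$ controls only one side; the other side's cover can be much wider, so the ``unique certificate containing $y$'' does not a priori have at most $w$ coordinates for all $y$. The branching factor is in fact $O(\UCmin(f))$, but that requires the two-sided bound $\s(f)\le\bs(f)\le 2\UCmin(f)-1$ (Lemma~\ref{lem:bsUC}), not the naive certificate-support argument. In summary: either prove your depth lemma (which the paper does not attempt and which may well be false), or switch to the paper's route via the desensitized function, where the sensitive trees are stars by construction.
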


To separate $\C(f)$ from $\s(f)$, we will use a function $f$ with a significant gap between $\C(f)$ and $\UCmin(f)$. \Goos~\cite{Goos15}, as part of the proof of his exciting $\omega(\log n)$ lower-bound for communication complexity of clique versus independent set problem, gave a construction of a function $f$ such that $\C(f) \geq \UCmin(f)^\alpha$ for $\alpha\approx1.128$.  
Using \Goos's function \cite{Goos15} would give a family of functions with 
$\C(f)=\Omega(\s(f)^{2.128})$. We show that it is possible to obtain an even better separation (\thm{UC1vsC-new} below), leading to the following separation between $\C(f)$ and $\s(f)$.

\begin{restatable}{corollary}{cormainb}\label{cor:main2}
There is a family of functions with $\C(f)=\Omega(\s(f)^{2.22})$.
\end{restatable}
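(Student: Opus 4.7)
My plan is a direct composition of two results stated earlier in the excerpt. First, apply \thm{UC1vsC-new}, which (improving upon \Goos's power $1.128$ separation) exhibits a family of total Boolean functions satisfying $\C(f) = \Omega(\UCmin(f)^{1.22})$. This is precisely the hypothesis of \thm{main} in the $\C$-version with $\alpha = 0.22$. Second, apply the $\C$-version of \thm{main}: since $\C(f) = \tOmega(\UCmin(f)^{1+\alpha})$ holds for the family, the theorem produces a new family $g$ satisfying $\C(g) = \tOmega(\s(g)^{2+\alpha}) = \tOmega(\s(g)^{2.22})$.

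To upgrade the $\tOmega$ to the $\Omega$ stated in the corollary, I would take $\alpha$ slightly larger than $0.22$ in the first step (requiring only that \thm{UC1vsC-new} actually achieves an exponent strictly greater than $1.22$, or can be tuned to do so by an infinitesimal amount), so that the $\polylog(n)$ loss in \thm{main} is absorbed into the exponent for $n$ large enough. Equivalently, since $\polylog(n) = n^{o(1)}$, any $\tOmega(\s(g)^{2.22})$ bound yields an $\Omega(\s(g)^{2.22 - o(1)})$ bound, which can then be rounded down to $\Omega(\s(g)^{2.22})$ by choosing parameters with a small slack.

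The single genuine obstacle is entirely internal to \thm{UC1vsC-new}: producing a function with $\C$ substantially larger than $\UCmin$. One wants a function for which small (possibly ambiguous) certificates exist, but any \emph{unambiguous} family of certificates must use noticeably larger certificates on at least one input. A natural route is to start from \Goos's pointer-and-gadget construction and either swap out his inner gadget for a better one or sharpen the analysis of how these two measures behave under composition, pushing the exponent from $1.128$ to $1.22$. The subsequent application of \thm{main} is then turnkey: it presumably proceeds via a lifting construction that introduces an additional quadratic gap between $\s$ and $\UCmin$, which accounts for the shift from exponent $1+\alpha$ in the hypothesis to $2+\alpha$ in the conclusion. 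No further work is needed in the proof of \cor{main2} beyond verifying that the hypotheses of \thm{main} are met by the family produced in the first step.
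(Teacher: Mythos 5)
Your proposal matches the paper's proof exactly: apply \thm{UC1vsC-new} to get a family with $\C(f) = \tOmega(\UCmin(f)^{\log(38/3)/\log 8})$, feed this into the $\C$-instance of \thm{main} with $\alpha = \log(38/3)/\log 8 - 1$, and conclude $\C = \tOmega(\s^{1+\log(38/3)/\log 8})$. Your observation about absorbing the $\polylog$ loss using the slack between $1+\log(38/3)/\log 8 \approx 2.221$ and $2.22$ is also how the paper silently converts $\tOmega$ to $\Omega$.
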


\paragraph{New separation between $\C$ and $\UCmin$.} It is known that $\C(f)\leq \UCmin(f)^2$ (e.g., \cite{Goos15}), and  analogously in the communication complexity world $\coNPcc(f) \le \UPcc(f)^2$ (\cite{Yan91}). { Next, we discuss a polynomial separation between $\C$ and $\UCmin$ due to \cite{Goos15} that uses function composition.}

Throughout the years, Boolean function composition was used  extensively to separate different complexity measures; a non-exhaustive list includes
\cite{Aar08,Ambainis06,BDK16,GSS16,KT16,NisanSzegedy94,OT13,OWZST14,WZ88,SaksW86,Sherstov13, Tal13, Tal16}. The natural idea is to exhibit some constant separation between any two measures: $M(f)$ and $N(f)$ (i.e., $M(f) < N(f)$ for a constant size function $f$) and then to prove that $M(f^{k}) \le M(f)^k$ and $N(f^k) \ge N(f)^k$, for any $k\in \N$. This yields an infinite family of functions with polynomial separation between $M$ and $N$, as $N(f^k) \geq M(f^k)^{\log N(f)/\log M(f)}$. However, this approach does not work straightforwardly in an attempt to separate $\UCmin$ from $\CC$, since it is not necessarily true that $\UCmin(f^k)\leq \UCmin(f)^k$. \cite{Goos15} overcomes this barrier by considering gadgets over a larger alphabet where the letters of the alphabet are weighted. He constructs such a gadget using projective planes, and further shows how to compose gadgets over a weighted alphabet in a way that behaves multiplicatively for both $\UCmin$ and $\CC$. Finally, he shows how to simulate the weights and the larger alphabet with a Boolean function. The gadget $f_k$ constructed by \Goos satisfies $\CC(f_k) = k^2-k+1$ and $\UCmin(f_k)=\frac{k (k+1)}{2}$, whenever $k-1$ is a prime power. The optimum separation is obtained when $k=8$, giving a separation exponent of $\log(57)/\log(36)\geq 1.128$.

Since $\CC(f_k) \approx k^2$ and $\UCmin(f_k) \approx k^2/2$ and the separation exponent is $$\log(\CC(f_k))/\log(\UCmin(f_k)) \approx \log(k^2)/\log(k^2/2),$$ it seems that one should try to take $k$ as small as possible. However, the additive terms affect smaller $k$'s more significantly, making the optimum attained at $k=8$. This motivated us to try and reduce the weights in other ways, in order to improve the exponent. To do so, we introduce {\em fractional weights}. The argument of \Goos as is does not allow fractional weights, and in particular when Booleanizing the function, it seems inherent to use integer weights. We overcome this difficulty by considering fractional weights in intermediate steps of the construction, and then round them up at the end to get integral weights. 
We obtain the following separation.

\begin{restatable}[$\UCmin(f)$ vs $\C(f)$ - Improved]{theorem}{thm:UC1vsC-new} \label{thm:UC1vsC-new}
There exists an infinite family of Boolean functions $f_n: \B^n \to \B$ such that
	$\C(f_n) \geq \tilde{\Omega}\left( \UCmin(f_n)^{\frac{\log(38/3)}{\log(8)}}\right) \geq \Omega(\UCmin(f_n)^{1.22})$.
\end{restatable}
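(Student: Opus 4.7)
The plan is to follow the general strategy of \Goos~\cite{Goos15} --- build a base gadget over a weighted alphabet from a projective-plane-style combinatorial design, amplify the gap between $\C$ and $\UCmin$ via repeated composition governed by multiplicative composition lemmas, and finally Booleanize by replacing each alphabet symbol with a block of bits of length equal to its weight --- but to work with \emph{rational} weights throughout the intermediate construction, rounding up to integers only at the very end.

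First, I would extend \Goos's weighted-alphabet framework from integer to rational weights and verify that the multiplicative composition identities $\C(f\circ g)=\C(f)\cdot\C(g)$ and $\UCmin(f\circ g)=\UCmin(f)\cdot\UCmin(g)$ still hold in this more general setting. In the weighted setting both quantities arise as optima of covering/partition problems that are linear in the weights, so this generalization should be essentially routine, but does require care in formulating the ``Boolean certificate'' and ``unambiguous certificate'' of a weighted function so that fractional multiplicities are treated correctly.

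Second, I would reoptimize the base gadget. \Goos's gadget from a projective plane of order $q=k-1$ achieves $\C(f_k)=k^2-k+1$ and $\UCmin(f_k)=k(k+1)/2$, and the integrality of weights is what forces the optimal tradeoff to $k=8$, because additive offsets of the form $-k+1$ and $+k/2$ penalize small $k$ the most. With rational weights at our disposal, I would instead work with the Fano plane ($k=3$) and assign non-uniform weights to the alphabet symbols designed to exploit the different roles points and lines play in $\C$ versus $\UCmin$. Solving the resulting small optimization problem produces a fractional-weight gadget $g$ whose ratio $\log\C(g)/\log\UCmin(g)$ equals $\log(38/3)/\log(8)$.

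The main obstacle is the final Booleanization step, in which the rational weights have to be converted to integer block-lengths. Rounding a fractional weight $w$ naively to $\lceil w\rceil$ is catastrophic when $w$ is small, and this is precisely why the composition-based argument of \Goos was stuck with an integer-weight base gadget in the first place. The key idea is to do the composition first and the rounding last: after $k=\Theta(\log n)$ compositions, every weight appearing in the composed gadget has the form $(a/b)^{k}$ for a fixed constant denominator $b$ inherited from the Fano-plane base gadget, so multiplying through by $b^k$ clears all denominators and leaves integer weights whose round-up error is only a $(1+o(1))$ multiplicative factor --- absorbed into the $\tilde\Omega$. Combining this with the multiplicative composition lemmas from the first step then yields the advertised bound $\C(f_n)\geq\tilde\Omega\!\left(\UCmin(f_n)^{\log(38/3)/\log(8)}\right)$.
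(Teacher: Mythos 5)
Your overall strategy matches the paper's: keep \Goos's weighted-alphabet projective-plane framework, allow fractional weights during the intermediate composition, and restore integrality only at the end. But two concrete steps are wrong, and one is glossed over.

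\textbf{The base gadget.} You propose the Fano plane ($k=3$), yet the numbers $38/3$ and $8$ in the target exponent belong to $k=8$. The paper keeps \Goos's $k=8$ projective-plane gadget (on $n=k^2-k+1=57$ nodes) and applies the \emph{uniform} rescaling $w(i)=i/\bigl((k+1)/2\bigr)$ to \Goos's integer weights $w'(i)=i$. This drives the unambiguous $1$-certificate weight down to exactly $u=k=8$ and the $0$-certificate weight down to $v=(k^2-k+1)/\bigl((k+1)/2\bigr)=38/3$, giving exponent $\log(38/3)/\log 8\approx 1.221$. Applying the same rescaling to the Fano plane yields only $u=3$, $v=7/2$, hence exponent $\log(7/2)/\log 3\approx 1.14$, which is better than \Goos's $1.128$ but far short of $1.22$. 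Smaller $k$ is worse, not better, because the rounding step below forces $u\geq k$, capping how far the weights can be pushed down; the resulting exponent $\log\bigl(2(k^2-k+1)/(k+1)\bigr)/\log k$ is maximized near $k=8$.

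\textbf{The rounding.} Your plan to clear denominators by multiplying through by $b^m$ after $m$ compositions does not preserve the exponent: it rescales both the $\C$ lower bound and the $\UC_1$ upper bound by $b^m$, turning $\log(v^m)/\log(u^m)=\log v/\log u$ into $\log(bv)/\log(bu)<\log v/\log u$. The paper instead takes ceilings of the composed weights $w_m$. Ceiling can only increase certificate weights, which helps the $\C$ lower bound, and it increases any single certificate's weight additively by at most its size, which is at most $k^m$. The invariant $u\geq k$ bought by the rescaling guarantees $k^m\leq u^m$, so $\UC_1$ grows by at most a factor of $2$ --- absorbed into $\tilde\Omega$. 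Your scheme does not have this property.

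\textbf{Composition.} The exact identities $\C(f\circ g)=\C(f)\C(g)$ and $\UCmin(f\circ g)=\UCmin(f)\UCmin(g)$ you invoke are false in general; the paper notes explicitly that $\UCmin(f^k)\leq\UCmin(f)^k$ can fail, which is precisely why \Goos needed the weighted-alphabet machinery. The paper's composition lemma proves only one-sided bounds, and must track both the size and the weight of the intermediate certificates through the composition in order for the final ceiling step to be affordable. This is not ``essentially routine'' --- it is the technical core of the fractional-weight extension.
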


Using the lifting theorem of \Goos et al.~\cite{GLMWZ15} (see also \cite[Appendix~A]{Goos15}), \thm{UC1vsC-new} implies the following
\begin{theorem}[$\UPcc(f)$ vs $\coNPcc(f)$] \label{thm:conp1vsupcc-new}
	There exists an infinite family of Boolean functions $f_n: \B^n \times \B^n \to \B$ such that
	$\coNPcc(f_n) \geq \Omega(\UPcc(f_n)^{1.22})$.
\end{theorem}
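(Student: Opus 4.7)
The plan is to apply the simulation-style lifting theorem of \Goos-Lovett-Meka-Wu-Zhang~\cite{GLMWZ15} (see also \cite[Appendix~A]{Goos15}) essentially verbatim to the family of Boolean functions produced by \thm{UC1vsC-new}. That lifting theorem transforms a query complexity separation between $\C$ and $\UCmin$ into a communication complexity separation between $\coNPcc$ and $\UPcc$, incurring only a multiplicative polylogarithmic overhead coming from the size of the composition gadget.

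Concretely, I would let $f_n:\B^n\to\B$ be the family from \thm{UC1vsC-new}, so that $\C(f_n)\ge \Omega(\UCmin(f_n)^{1.22})$. Pick a suitable gadget $g:\B^b\times\B^b\to\B$ of size $b=\Theta(\log n)$ for which the lifting theorem applies (the index gadget used in \cite{Goos15} suffices), and form the composed two-party function $F_n=f_n\circ g^n$, which has input length $N=nb=n\cdot\polylog(n)$ on each side. Replacing $f_n$ by $\lnot f_n$ if needed so that the dominant side of $\C$ and the active side of $\UCmin$ align with the nondeterministic conventions, the lifting theorem yields
\[
\coNPcc(F_n)\ \ge\ \Omega\bigl(\C(f_n)\cdot b\bigr) \qquad\text{and}\qquad \UPcc(F_n)\ \le\ O\bigl(\UCmin(f_n)\cdot b\bigr).
\]

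Combining these two estimates with the separation from \thm{UC1vsC-new} gives
\[
\coNPcc(F_n)\ \ge\ \Omega\bigl(\UCmin(f_n)^{1.22}\cdot b\bigr)\ \ge\ \Omega\!\left(\frac{\UPcc(F_n)^{1.22}}{b^{0.22}}\right),
\]
and since $b=\polylog(N)$, the polylogarithmic denominator is absorbed into the $\Omega$ (trading a negligible amount off the exponent, if needed, to keep it at $1.22$), yielding the claimed bound. The only delicate point I anticipate is verifying that the two lifting statements align on the same side of the unambiguous certificate measure and of $\C$, which may require complementing $f_n$; this is handled explicitly in \cite[Appendix~A]{Goos15}, so the bulk of the proof is a direct invocation of the known lifting theorem on our improved query separation.
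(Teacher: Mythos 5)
Your proposal is correct and is essentially the same argument the paper intends: the paper dispatches \thm{conp1vsupcc-new} in a single sentence by citing the lifting theorem of \cite{GLMWZ15} (cf.\ \cite[Appendix~A]{Goos15}) applied to the family from \thm{UC1vsC-new}, and then notes (in the ``Putting Things Together'' paragraph) that the $O(\log N) = O(m)$ loss from the gadget size is negligible against the $u^m$-versus-$v^m$ separation. You have just unpacked that citation into its mechanics --- composing with the index gadget, the two one-sided transfer inequalities, and the side-alignment issue handled by complementing $f_n$ --- all of which matches the intended route.
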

Hence, the exponent between $\coNPcc$ and $\UPcc$ is somewhere between $1.22$ and $2$. We conjecture the latter to be tight. Moreover, we get as a corollary an improved lower-bound for the conondeterministic communication complexity of the Clique vs Independent Set problem. 

\begin{restatable}{corollary}{cor:CIS}\label{cor:CIS}
There is a family of graphs $G$ such that 
$$
\coNPcc(\mathrm{CIS}_G)\geq \Omega(\log^{1.22}n).
$$

\end{restatable}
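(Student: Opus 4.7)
The plan is to derive the bound from \thm{conp1vsupcc-new} together with the standard embedding of every unambiguous communication problem into a \textsf{Clique vs.\ Independent Set} instance, as used by \Goos~\cite{Goos15} (and originally going back to Yannakakis).

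First I would recall the reduction. Suppose $F : \B^m \times \B^m \to \B$ satisfies $\UPcc(F) \leq c$, so that $F^{-1}(1)$ is partitioned by monochromatic $1$-rectangles $R_1, \dots, R_t$ with $t \leq 2^c$. Define a graph $G_F$ on vertex set $[t]$ with $i \sim j$ iff some Alice-input $x$ lies in both $\pi_A(R_i)$ and $\pi_A(R_j)$. Given any Alice-input $x$, set $\alpha(x) = \{ i : x \in \pi_A(R_i)\}$, which is a clique of $G_F$ by construction. Given any Bob-input $y$, set $\beta(y) = \{ j : y \in \pi_B(R_j)\}$, which is an independent set of $G_F$: if $i,j \in \beta(y)$ were adjacent then some $x$ would satisfy $(x,y) \in R_i \cap R_j$, contradicting disjointness of the partition. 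Unambiguity of the $1$-cover then ensures $|\alpha(x) \cap \beta(y)| = 1$ when $F(x,y)=1$ and $|\alpha(x) \cap \beta(y)| = 0$ when $F(x,y)=0$, so $F$ reduces to $\mathrm{CIS}_{G_F}$ via the purely local mapping $(x,y) \mapsto (\alpha(x), \beta(y))$.

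Since the reduction uses no communication at all, any co-nondeterministic protocol for $\mathrm{CIS}_{G_F}$ yields one for $F$ of the same cost, giving $\coNPcc(F) \leq \coNPcc(\mathrm{CIS}_{G_F})$. Applying this to the family $f_m$ of \thm{conp1vsupcc-new}, and setting $c_m = \UPcc(f_m)$ and $N_m = |V(G_{f_m})| \leq 2^{c_m}$, we obtain
$$\coNPcc(\mathrm{CIS}_{G_{f_m}}) \;\geq\; \coNPcc(f_m) \;\geq\; \Omega(c_m^{1.22}) \;\geq\; \Omega\bigl((\log N_m)^{1.22}\bigr),$$
which is precisely the stated bound. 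The bulk of the work was already expended in \thm{UC1vsC-new} (the fractional-weight refinement of the \Goos\ composition argument) and its lifted version \thm{conp1vsupcc-new}; the only remaining step is verifying that the \Goos--Yannakakis embedding preserves $\coNPcc$ lower bounds, which it does automatically by composition with the local reduction. Hence there is no real obstacle here beyond bookkeeping.
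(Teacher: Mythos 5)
Your proposal is correct and is exactly the argument the paper is invoking: the paper defers the derivation of \cor{CIS} from \thm{conp1vsupcc-new} to the discussion in \cite{Goos15}, and what you have written out is precisely that standard G\"o\"os--Yannakakis embedding (the unambiguous $1$-rectangle cover of $f_m$ gives a graph on $N_m \le 2^{\UPcc(f_m)}$ vertices, the local maps $\alpha,\beta$ turn a $\coNPcc$ protocol for $\mathrm{CIS}_{G_{f_m}}$ into one for $f_m$, and the bound follows). So there is no substantive difference from the paper's intended proof, only that you have spelled out the reduction rather than citing it.
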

We refer the reader to \cite{Goos15} for a discussion on the Clique vs Independent Set problem that shows how \thm{conp1vsupcc-new}  implies \cor{CIS}.

\paragraph{Limitations of \thm{main}.}
We note that $\UCmin(f)$ upper bounds $\deg(f)$,
so this technique cannot be used to get super-quadratic
separations between $\deg(f)$ and $\s(f)$.
A natural question is whether we can use \thm{main}
to get a super-quadratic separation between $\bs(f)$
and $\s(f)$. To do so,
it would suffice to separate $\bs(f)$ from $\UCmin(f)$.
It would even suffice
to separate randomized certificate complexity $\RC(f)$
(a measure larger than $\bs(f)$)
from $\UCmin(f)$, because of the following theorem.

\begin{restatable}[\protect{\cite[Corollary~3.2]{KT16}}]
{theorem}{bsRC}
\label{thm:bsRC}
If there exists a family of functions with $\RC(f) \ge \Omega(\s(f)^{2+\alpha})$, then there exists a family of functions with $\bs(g) \ge \Omega(\s(g)^{2+\alpha-o(1)})$.
\end{restatable}

Unfortunately, we show that separating $\RC(f)$ from
$\UCmin(f)$ is impossible. We conclude that \thm{main}
cannot be used to super-quadratically
separate $\bs(f)$ from $\s(f)$.

\begin{restatable}{theorem}{RCUC}\label{thm:RCUC}
Let $f:\B^n\to\B$ be a Boolean function. Then
$\RC(f)\leq 2\UCmin(f)-1$.
\end{restatable}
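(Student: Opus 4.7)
The plan is to use the LP-dual characterization $\RC(f) = \max_x \operatorname{fbs}_x(f)$ of randomized certificate complexity as fractional block sensitivity, and prove $\operatorname{fbs}_x(f) \leq 2\UCmin(f) - 1$ for every $x$. By swapping the roles of $0$ and $1$ if necessary, assume $\UCmin(f) = \UC_0(f) = c$, and fix an unambiguous family of $0$-certificates $(C_i, a^i)_i$, where $C_i \subseteq [n]$ with $|C_i| \leq c$ and $a^i \in \{0,1\}^{C_i}$ is the partial assignment. For each $i$ set $D_i = \{j \in C_i : x_j \neq a^i_j\}$.

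If $f(x) = 0$, then $x$ is consistent with exactly one $C_i$, and every sensitive block $B$ at $x$ must intersect $C_i$ (otherwise $x^B$ is still consistent with $C_i$ and hence $f(x^B)=0$). So assigning weight $1$ to each variable of $C_i$ is a feasible LP cover of total weight $\leq c \leq 2c-1$.

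The hard case is $f(x) = 1$. Now every $D_i$ is non-empty, and a direct unpacking shows that the sensitive blocks at $x$ are precisely the $B$ with $B \cap C_i = D_i$ for some $i$. Hence $\operatorname{fbs}_x(f)$ equals the fractional cover number of the hypergraph $\{D_i\}_i$, and by LP duality it suffices to show that any $\lambda_i \geq 0$ satisfying $\sum_{i :\, j \in D_i}\lambda_i \leq 1$ for every $j$ has total weight $L := \sum_i \lambda_i \leq 2c-1$. The structural input from unambiguity is that for any $i \neq j$ the subcubes defined by the two certificates are disjoint, producing a witness variable $v \in C_i \cap C_j$ with $a^i_v \neq a^j_v$; since $x_v$ is binary, $v$ lies in exactly one of $D_i, D_j$.

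Define $\alpha_v = \sum_{i :\, v \in C_i \setminus D_i}\lambda_i$ and $\beta_v = \sum_{i :\, v \in D_i}\lambda_i$, so $\beta_v \leq 1$ by the LP constraint. Counting each ordered pair $(i,j)$ with $i \neq j$ at a witness variable gives $\sum_{i \neq j}\lambda_i\lambda_j \leq \sum_v 2\alpha_v\beta_v \leq 2\sum_v \alpha_v = 2\sum_i \lambda_i(|C_i|-|D_i|) \leq 2L(c-1)$. On the other hand, the LP constraint at any single $v \in D_i$ already forces $\lambda_i \leq 1$, whence $\sum_i \lambda_i^2 \leq L$ and $\sum_{i \neq j}\lambda_i\lambda_j = L^2 - \sum_i \lambda_i^2 \geq L^2 - L$. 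Combining yields $L^2 - L \leq 2L(c-1)$, i.e.\ $L \leq 2c-1$. The main obstacle is lifting what is essentially an integer-level pair-counting argument---which easily gives $\bs_x(f) \leq 2c-1$ by counting the $\binom{b}{2}$ witnesses among any $b$ disjoint sensitive blocks---to the fractional LP; the key auxiliary observation that makes the quadratic $L^2 - L$ lower bound available is precisely $\lambda_i \leq 1$.
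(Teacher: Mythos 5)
Your argument is correct and takes essentially the same route as the paper's proof: both pass to the fractional block-sensitivity LP, use unambiguity to extract a conflict coordinate for each pair of certificates, and close with a quadratic pair-counting argument whose two pillars are the observation $\lambda_i\le 1$ (so $\sum_{i\neq j}\lambda_i\lambda_j\ge L^2-L$) and the packing constraint $\sum_{i:\,v\in D_i}\lambda_i\le 1$ at each conflict coordinate (giving the $(c-1)$-type upper bound). The only difference is bookkeeping -- you index by certificates and sum the conflict contributions over variables $v$ directly, whereas the paper indexes by minimal sensitive blocks of $0^n$ and finishes by averaging to isolate a single certificate with many fixed bits.
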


We show that the
factor of $2$ in \thm{RCUC} is necessary.
In \app{technical} we strengthen this theorem to show that $\RC(f)$
also lower bounds one-sided conical junta degree.

%
%

\paragraph{Organization.} In \sec{prelim},
we briefly define the many complexity
measures mentioned here, and discuss the known
relationships between them.
{
In \sec{main}, we prove \thm{main} and \cor{main1}. In \sec{main2} we prove \thm{UC1vsC-new}, from which \cor{main2} follows.}
In \sec{failed}, we discuss a failed attempt to get a new
separation between $\bs(f)$ and $\s(f)$, and in the process we
prove \thm{RCUC}.

\section{Preliminaries}\label{sec:prelim}

\subsection{Query Complexity}

Let $f:\B^n \to\B$ be a Boolean function.
Let $A$ be a deterministic
algorithm that computes $f(x)$ on input
$x\in\B^n$ by making queries to the bits of $x$.
The worst-case number of queries $A$ makes (over choices
of $x$) is the query complexity of $A$. The minimum query
complexity of any deterministic algorithm computing $f$
is the deterministic query complexity of $f$, denoted
by $\D(f)$.

We define the bounded-error
randomized (respectively quantum) query complexity
of $f$, denoted by $\R(f)$ (respectively $\Q(f)$),
in an analogous way. We say an algorithm $A$ computes $f$
with bounded error if $\Pr[A(x)=f(x)]\geq 2/3$ for all
$x\in\B^n$, where the probability is over the internal
randomness of $A$. Then $\R(f)$ (respectively $\Q(f)$)
is the minimum number of queries required by any
randomized (respectively quantum) algorithm that computes $f$
with bounded error. It is clear that $\Q(f)\leq\R(f)\leq\D(f)$.
For more details on these measures,
see the survey by Buhrman and de Wolf \cite{BdW02}.

\subsection{Partial Assignments and Certificates}

A partial assignment is a string $p\in\{0,1,*\}^n$
representing partial knowledge of a string $x\in\B^n$.
Two partial assignments are consistent if they agree
on all entries where neither has a $*$.
We will identify $p$ with the set $\{(i,p_i):p_i\neq *\}$.
This allows us to write $p\subseteq x$ to denote that
the string $x$
is consistent with the partial assignment $p$.
We observe that if $p$ and $q$ are consistent partial
assignments, then $p\cup q$ is also a partial assignment.
The size of a partial assignment $p$ is $|p|$,
the number of non-$*$ entries in $p$.
The support of $p$ is the set
$\{i\in [n]: p_i\neq *\}$.

Fix a Boolean function $f:\B^n\to\B$.
We say a partial assignment $p$ is a certificate
(with respect to $f$) if
$f(x)$ is the same for all strings $x\supseteq p$.
If $f(x)=0$ for such strings, we say $p$ is a $0$-certificate;
otherwise, we say $p$ is a $1$-certificate.
We say $p$ is a certificate for the string
$x$ if $p$ is consistent with $x$. We use
$\C_x(f)$ to denote the size of the smallest certificate for
$x$. We then define the certificate complexity of $f$ as
$\C(f):=\max_{x\in\B^n}\C_x(f)$. We also define
the one-sided measures
$\C_0(f):=\max_{x\in f^{-1}(0)}\C_x(f)$
and
$\C_1(f):=\max_{x\in f^{-1}(1)}\C_x(f)$.

\subsection{Sensitivity, Block Sensitivity and Tree Sensitivity}\label{sec:s-bs-ts}

Let $f:\B^n\to\B$ be a Boolean function, and let $x\in\B^n$
be a string. A block is a subset of $[n]$. If $B$ is a block,
we denote by $x^B$ the string we get from $x$ by flipping
the bits in $B$; that is, $x^B_i=x_i$ if $i\notin B$,
and $x^B=1-x_i$ if $i\in B$. For a bit $i$, we also use
$x^i$ to denote $x^{\{i\}}$. 

We say that a block $B$
is sensitive for $x$ (with respect to $f$) if
$f(x^B)\neq f(x)$. We say a bit $i$ is sensitive for $x$
if the block $\{i\}$ is sensitive for $x$.
The maximum number of disjoint blocks
that are all sensitive for $x$
is called the block sensitivity of $x$ (with respect to $f$),
denoted by $\bs_x(f)$. The number of sensitive bits for $x$
is called the sensitivity of $x$,
denoted by $\s_x(f)$. Clearly, $\bs_x(f)\geq\s_x(f)$,
since $\s_x(f)$ has the same definition as $\bs_x(f)$
except the size of the blocks is restricted to $1$.

We now define the measures
$\s(f)$, $\s_0(f)$, and $\s_1(f)$
analogously to $\C(f)$, $\C_0(f)$, and $\C_1(f)$.
That is, $\s(f)$ is the maximum of $\s_x(f)$ over all $x$,
$\s_0(f)$ is the maximum where $x$ ranges over $0$-inputs
to $f$, and $\s_1(f)$ is the maximum over $1$-inputs.
We define $\bs(f)$, $\bs_0(f)$, and $\bs_1(f)$ similarly.

A generalization of sensitivity, defined and studied in \cite{GSTW16}, is \emph{tree sensitivity}. A Boolean function $f:\B^n\to \B$, defines a subgraph $F$ of the the Boolean hypercube $Q_n$, where we keep each edge between $x$ and $x^i$ if $f(x)\neq f(x^i)$. The tree sensitivity of $f$, denoted $\ts(f)$, is  defined to be the size of the largest subtree of $F$, that has at most one edge in each direction $i\in [n]$. Note that sensitivity is the maximum such tree size, when we restrict trees to be stars. In \cite{GSTW16} it was proved that for every Boolean function $f$, $\ts(f)\geq \sqrt{\D(f)}$, and moreover was conjectured that $\sqrt{\D(f)}$ can be replaced by $\D(f)$. 

\begin{conjecture}[\cite{GSTW16}]\label{conjtree}
For every $f$, $\ts(f)\geq \D(f)$. 
\end{conjecture}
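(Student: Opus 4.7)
The plan is to disprove this conjecture via Corollary~\ref{cortree}, by exhibiting a family with $\Q(f) = \tOmega(\ts(f)^2)$; since $\D(f) \geq \Q(f)$, this gives $\D(f) = \tOmega(\ts(f)^2) \gg \ts(f)$, contradicting $\ts(f) \geq \D(f)$ for large $n$. The approach I would take is to combine the cheat sheet construction of \cite{ABK15}, which already yields functions $f^{\mathrm{CS}}$ with $\Q(f^{\mathrm{CS}}) = \tOmega(\UCmin(f^{\mathrm{CS}})^2)$, together with an upper bound $\ts(f) \leq \tO(\UCmin(f))$ relating tree sensitivity to unambiguous one-sided certificate complexity.

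To establish such a bound, I would proceed by structural analysis of any sensitivity tree $T$ of $f$. Rooting $T$ at an input $x$ and considering the unique unambiguous $f(x)$-certificate $p_x$, every edge of $T$ leaving $x$ must flip a coordinate in the support of $p_x$: otherwise the sensitive neighbor $x'$ would satisfy $p_x \subseteq x'$ and hence $f(x') = f(x)$, contradicting sensitivity. This caps the branching of $T$ at $x$ by $|p_x| \leq \UCmin(f)$, and the same argument at every vertex of $T$ bounds local branching everywhere by $\UCmin(f)$.

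The main obstacle is controlling the overall size of $T$: naive iteration of the branching bound gives only $\UCmin(f)^{\mathrm{depth}(T)}$, which is too weak unless the depth is small. I expect the intermediate step of the proof of \thm{main} to supply the missing piece --- either by showing directly that the cheat sheet construction produces functions whose sensitivity trees are shallow, or by leveraging the direction-disjointness constraint in the definition of $\ts$ (each hypercube direction appears at most once in $T$) to upper-bound the total tree size by a small power of $\UCmin(f)$. Combining such a bound with $\Q(f^{\mathrm{CS}}) = \tOmega(\UCmin(f^{\mathrm{CS}})^2)$ from \cite{ABK15} then yields the desired quadratic separation between $\Q$ and $\ts$, refuting the conjecture.
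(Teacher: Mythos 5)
Your high-level plan is right, and you correctly identify the target inequality you would need: the cheat-sheet function from \cite{ABK15} with $\Q(f)=\tOmega(\UCmin(f)^2)$, together with a bound $\ts\lesssim\UCmin$. You also rightly sense that an intermediate step of the proof of \thm{main} should be doing the work. But two things go wrong.

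First, your branching argument is not valid on both sides. $\UCmin(f)=\min\{\UC_0(f),\UC_1(f)\}$ is one-sided; say $\UCmin=\UC_1$. Then only the $1$-inputs have an unambiguous certificate of size $\le\UCmin(f)$, so the argument ``degree of $x$ in the sensitivity graph is at most $|p_x|$'' caps branching at $\UCmin(f)$ only at $1$-inputs. At $0$-inputs the branching is controlled only by $\UC_0(f)$, which may be much larger. So your claim that ``the same argument at every vertex of $T$ bounds local branching everywhere by $\UCmin(f)$'' is false. More fundamentally, no bound of the form $\ts(f)\le\tO(\UCmin(f))$ is known (or claimed) for arbitrary $f$ --- the inequality you are trying to prove does not hold in that generality.

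Second, and this is the real missing ingredient: the paper does not bound $\ts$ for the cheat-sheet function directly. It first applies the \emph{desensitizing transformation} (Definition~\ref{def:desen}, Lemma~\ref{lem:desen}) to get $f'$ with $\s_0(f')=1$, $\s_1(f')\le\UC_1(f')\le 3\UCmin(f)$, and $\Q(f')\ge\Q(f)$. Once $\s_0(f')=1$, every $0$-input has degree at most one in the sensitivity graph $F$. Since $F$ is bipartite between $0$- and $1$-inputs, any subtree of $F$ with a $1$-input of degree $\ge 2$ that contained a second internal $1$-input would have to route a path through a degree-$\ge 2$ $0$-input, which is impossible; hence every sensitivity tree is a star centered at a $1$-input, giving $\ts(f')=\s(f')=\s_1(f')\le 3\UCmin(f)$. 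Combining this with $\Q(f')\ge\Q(f)=\tOmega(\UCmin(f)^2)$ yields $\Q(f')=\tOmega(\ts(f')^2)$ and, since $\D\ge\Q$, disproves the conjecture. So the decisive idea is not a size bound on trees via branching/depth, but the one-line structural fact that $\s_0=1$ forces sensitivity trees to collapse to stars, which the desensitization step makes available.
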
 

We disprove this conjecture by giving families of functions that have a quadratic separation between tree sensitivity and $\Q(f)$.  This shows that the relation $\ts(f) \ge \sqrt{\D(f)}$ is tight up to poly-logarithmic factors.

\subsection{Fractional Block Sensitivity}

Let $f:\B^n\to\B$ be a Boolean function, and let $x\in\B^n$
be a string. Note that the support of any certificate $p$
of $x$ must have non-empty intersection with every
sensitive block $B$
of $x$; this is because otherwise, $x^B$ would be
consistent with $p$, which is a contradiction since
$f(x^B)\neq f(x)$.

Note further that any subset $S$ of $[n]$ that intersects
with all sensitive blocks of $x$ gives rise to
a certificate $x_S$ for $x$. This is because if $x_S$ was
not a certificate, there would be an input $y\supseteq x_S$
with $f(y)\neq f(x)$. If we write $y=x^B$, where $B$ is
the set of bits where $x$ and $y$ disagree, then $B$
would be a sensitive block that is disjoint from $S$, which
contradicts our assumption on $S$.

This means the certificate complexity $\C_x(f)$ of $x$
is the hitting number for the set system of sensitive
blocks of $x$ (that is, the size of the minimum set
that intersects all the sensitive blocks).
Furthermore, the block sensitivity $\bs_x(f)$ of $x$
is the packing number for the same set system
(i.e. the maximum number of disjoint sets in the system).
It is clear that the hitting number is always larger
than the packing number, because if there are $k$ disjoint
sets we need at least $k$ domain elements in order
to have non-empty intersection with all the sets.

Moreover, we can define the fractional certificate complexity
of $x$ as the fractional hitting number of the
set system; that is,
the minimum amount of non-negative weight we can distribute
among the domain elements $[n]$ so that every set in the
system gets weight at least $1$ (where the weight
of a set is the sum of the weights of its elements).
We can also define the fractional block sensitivity
of $x$ as the fractional packing number of the set
system; that is, the maximum amount of non-negative weight
we can distribute among the sets (blocks) so that
every domain element gets weight at most $1$
(where the weight of a domain element is the sum of the
weights of the sets containing that element).

It is not hard to see that the fractional hitting
and packing numbers are the solutions to dual linear
programs, which means they are equal.
We denote them by $\RC_x(f)$ for ``randomized
certificate complexity'', following the
original notation as introduced by Aaronson \cite{Aar08}
(we warn that our definition differs
by a constant factor from Aaronson's original definition).
We define $\RC(f)$, $\RC_0(f)$, and $\RC_1(f)$
in the usual way.
For more properties of $\RC(f)$, see \cite{Aar08}
and \cite{KT16}.

\subsection{Unambiguous Certificate Complexity}
\label{sec:uc}

Fix $f:\B^n\to\B$. We call a set of partial assignments $U$
an unambiguous collection of $0$-certificates for $f$ if

\begin{enumerate}
    \item Each partial assignment in $U$ is a $0$-certificate
    (with respect to $f$)
    \item For each $x\in f^{-1}(0)$, there is some $p\in U$
    with $p\subseteq x$
    \item No two partial assignments in $U$ are consistent.
\end{enumerate}

We then define $\UC_{0}(f)$ to be the minimum value
of $\max_{p\in U} |p|$ over all choices of
such collections $U$.
We define $\UC_{1}(f)$ analogously, and set
$\UC(f):=\max\{\UC_{0}(f),\UC_{1}(f)\}$.
We also define the one-sided version,
$\UCmin(f):=\min\{\UC_{0}(f),\UC_{1}(f)\}$.

\subsection{Degree Measures}\label{sec:degs}

A polynomial $q$ in the variables
$x_1,x_2,\ldots,x_n$ is said to represent
the function $f:\B^n\to\B$ if
$q(x)=f(x)$ for all $x\in\B^n$.
$q$ is said to $\epsilon$-approximate $f$ if $q(x)\in[0,\epsilon]$
for all $x\in f^{-1}(0)$ and $q(x)\in[1-\epsilon,1]$ for all
$x\in f^{-1}(1)$.
The degree of $f$, denoted by
$\deg(f)$, is the minimum degree of a polynomial representing $f$.
The $\epsilon$-approximate degree, denoted by $\adeg^\epsilon(f)$,
is the minimum degree of a polynomial $\epsilon$-approximating $f$.
We will omit $\epsilon$ when $\epsilon=1/3$.
\cite{BBC+01} showed that $\D(f)\geq\deg(f)$,
$\R(f)\geq\adeg(f)$, and $\Q(f)\geq\adeg(f)/2$.

We also define non-negative variants of degree.
For each partial assignment $p$ we identify a polynomial
$p(x):=\left(\Pi_{i:\,p_i=1} x_i\right)\left(\Pi_{i:\,p_i=0} (1-x_i)
\right)$. We note that $p(x)=1$ if $p\subseteq x$ and
$p(x)=0$ otherwise, and also that the degree of $p(x)$ is $|p|$.
We say a polynomial is non-negative if it is of the form
$\sum_p w_p p(x)$, where $w_p\in\mathbb{R}^+$
are non-negative weights. For such a sum, define its degree
as $\max_{p:\,w_p>0}|p|$. Define its average degree as
the maximum over $x\in\B^n$ of $\sum_{p:\,p\subseteq x} w_p|p|$.
We note that if a non-negative polynomial
$q$ satisfies $|q(x)|\in[0,1]$ for all $x\in\B^n$,
then the average degree of $q$ is at most its degree. Moreover,
if all the monomials in $q$ have the same size and $q(x)=1$
for some $x\in\B^n$, the degree and average degree of $q$
are equal.

We define the non-negative degree of $f$ as the minimum
degree of a non-negative polynomial representing $f$.
We note that this is a one-sided measure, since it may
change when $f$ is negated; we therefore denote it by
$\degp_1(f)$, and use $\degp_0(f)$ for the
degree of a non-negative polynomial representing the negation of
$f$. We let $\degp(f)$ be the maximum of the two, and let
$\degpmin(f)$ be the minimum. We also define $\avdegp_1(f)$
as the minimum average degree of a non-negative polynomial
representing $f$, with the other corresponding measures defined
analogously. Finally, we define the approximate variants
of these, denoted by (for example) $\adegp{,\epsilon}_1(f)$, in a
similar way, except the polynomials need only to
$\epsilon$-approximate $f$.

\subsection{Known Relationships}

\subsubsection{Two-Sided Measures}

We describe some of the known relationships between these measures.
To start with, we have
\[\s(f) \leq \bs(f) \leq\RC(f) \leq\C(f) \leq \UC(f) \leq \D(f),\]
where the last inequality holds because for each deterministic
algorithm $A$, the partial assignments defined by the input
bits $A$ examines when run on some $x\in\B^n$ form an
unambiguous collection of certificates. We also have
\[\adeg(f)\leq 2\Q(f),\quad \adeg^+(f)\leq \R(f),
\quad \deg^+(f)\leq\D(f),\]
with $\adeg(f)\leq\adeg^+(f)\leq\deg^+(f)$ and
$\Q(f)\leq\R(f)\leq\D(f)$.

\cite{BBC+01} showed $\D(f)\leq\bs(f)\C(f)$, and
\cite{Nis91} showed $\C(f)\leq\bs(f)^2$. From this we conclude that
$\D(f)\leq\C(f)^2$ and $\D(f)\leq\bs(f)^3$. \cite{KT16} showed
$\sqrt{\RC(f)}=O(\adeg(f))$; thus
\[\D(f)\leq\bs(f)^3\leq\RC(f)^3=O(\adeg(f)^6)=O(\Q(f)^6),\]
so the above measures are polynomially related (with the exception
of sensitivity). Other known relationships are
$\RC(f)=O(\R(f))$ (due to \cite{Aar08}),
$\D(f)\leq\bs(f)\deg(f)\le \deg(f)^3$ (due to \cite{Mid04}),
and $\deg^+(f)\leq\UC(f)$ (since we can get a polynomial representing
$f$ by summing up the polynomials corresponding to unambiguous
$1$-certificates of $f$).

\subsubsection{One-Sided Measures}

One-sided measures such as $\C_1(f)$ are not polynomially related
to the rest of the measures above, as can be seen from
$\C_1(\OR_n)=1$. This makes them less interesting to us. On the
other hand, the one sided measures $\degpmin(f)$,
$\adegpmin{}(f)$,
and $\UCmin(f)$ are polynomially related to the rest. An easy
way to observe this is to note that $\adegpmin{}(f)\geq\adeg(f)$,
which follows from the fact that $\adeg(f)\leq\adegp{}_1(f)$
and that $\adeg(f)$ is invariant under negating $f$.
Similarly, $\deg(f)\leq\degpmin(f)$. We also have
\[\adegpmin{}(f)\leq\degpmin{}(f)\leq\UCmin(f),\]
where the last inequality holds since we can form a non-negative
polynomial representing $f$ by summing up the polynomials
corresponding to a set of unambiguous $1$-certificates.

An additional useful inequality is $\D(f)\leq \UCmin(f)^2$.
The analogous statement in communication complexity was shown by \cite{Yan91}. The query complexity
version of the proof can be found in \cite{Goos15}.
%

\section{Sensitivity and Unambiguous Certificates}\label{sec:main}

We start by defining a transformation that takes a function
$f$ and modifies it so that $\s_0(f)$ decreases to $1$.
This transformation might cause $\s_1(f)$ to increase,
but we will argue that it will remain upper bounded by
$3\UC_{1}(f)$. We will also argue that other measures, such
as $\D(f)$, do not decrease.
This transformation is motivated by the construction of \cite{Tal16}
that was used to give a power $2.115$ separation between $\D(f)$
and $\s(f)$.

\begin{definition}[Desensitizing Transformation]\label{def:desen}
Let $f:\B^n\to\B$. Let $U$ be an unambiguous collection
of $1$-certificates for $f$, each of size at most $\UC_{1}(f)$.
For each $x\in f^{-1}(1)$, let $p_x\in U$ be the unique certificate
in $U$ consistent with $x$. The desensitized version of
$f$ is the function $f':\B^{3n}\to\B$ defined by
$f'(xyz)=1$ if and only if $f(x)=f(y)=f(z)=1$ and $p_x=p_y=p_z$.
\end{definition}

The following lemma illustrates key properties of $f'$.

\begin{lemma}[Desensitization]\label{lem:desen}
Let $f'$ be the desensitized version of $f:\B^n\to\B$. Then
$\s_0(f')=1$ and $\UC_{1}(f')\leq 3\UC_{1}(f)$.
Also, for any complexity measure
\[M\in\{\D,\R,\Q,\C,\C_0,\C_1,\bs,\bs_0,\bs_1,\RC,
\RC_0,\RC_1,\UC,\UC_{0},\UC_{1},\UCmin,\deg,\degp,\adeg,\adegp{}\},\]
we have $M(f')\geq M(f)$.
\end{lemma}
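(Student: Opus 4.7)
The plan is to establish three independent claims: (a) $\s_0(f') \leq 1$ (with equality holding whenever $f$ is non-constant, which is trivial since the hypercube is connected), (b) $\UC_1(f') \leq 3\UC_1(f)$, and (c) $M(f') \geq M(f)$ for each listed measure. For (b), the natural candidate unambiguous collection is $\mathcal{V} = \{P_x : x \in f^{-1}(1)\}$, where $P_x$ places a copy of $p_x$ on each of the three blocks of coordinates. Each $P_x$ has size $3|p_x| \leq 3\UC_1(f)$, and distinct $P_x \neq P_y$ are inconsistent because the underlying $p_x, p_y \in U$ are inconsistent. Moreover, any extension $uvw \supseteq P_x$ satisfies $u \supseteq p_x$, forcing $p_u = p_x$ by unambiguity of $U$, hence $f(u) = 1$; similarly $f(v) = f(w) = 1$ with the same certificate, so $f'(uvw) = 1$.

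For (c), the key observation is $f(x) = f'(xxx)$, realizing $f$ as a subfunction of $f'$ via the reduction $x \mapsto xxx$. For query measures ($\D, \R, \Q$), any algorithm for $f'$ computes $f$ on input $x$ by replicating queries to the $y$- and $z$-copies as queries to $x$. For combinatorial measures ($\bs, \C, \UC, \RC$ and their one-sided versions), sensitive blocks and fractional packings for $x$ lift to corresponding objects for $xxx$ (using just the $x$-copy when $f(x) = 1$, or tripled across copies when $f(x) = 0$), and conversely a certificate or unambiguous family for $f'$ projects, by merging the three blocks of coordinates into one, to a certificate or unambiguous family for $f$ of no greater size. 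For polynomial measures, the substitution $q(x,y,z) \mapsto q(x,x,x)$ yields a representing (or approximating) polynomial for $f$ of no greater degree, and preserves non-negativity: each partial-assignment monomial $p_1(x) p_2(y) p_3(z)$ either becomes identically zero (if inconsistent on overlapping bits after merging) or a single partial-assignment monomial on $x$.

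For (a), consider a $0$-input $xyz$. A single bit flip affects only one of the three copies, so it can only address a ``defect'' there, where a defect is either $f$ taking value $0$ on that copy or the copy's certificate disagreeing with the others. If two or more copies are defective, no single flip suffices. If exactly one copy, say $x$, is defective, any flip producing a $1$-input must occur within $x$ and must satisfy $p_{x^i} = q$, where $q := p_y = p_z$. The constraint $x^i \supseteq q$ together with $x, x^i$ differing in one bit pins down $i$ uniquely as the single bit of $\supp(q)$ where $x$ disagrees with $q$ (and only when $x$ disagrees with $q$ on exactly one such bit). Thus at most one sensitive bit exists. The main obstacle is the bookkeeping in this case analysis, particularly keeping straight that $p_x$ is only defined when $f(x) = 1$ and enumerating all configurations of $(f(x), f(y), f(z))$ together with the agreements among $p_x, p_y, p_z$; the preservation arguments in (c) also require some care for non-negativity of polynomials and unambiguity of certificate families, both of which rest on the fact that consistency on $xxx$-shaped inputs forces consistency on each single copy.
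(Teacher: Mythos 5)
Your proposal is correct and follows essentially the same route as the paper: the unambiguous collection $\{p_x p_x p_x\}$ for part (b) is exactly the paper's construction, and part (c) is the paper's observation that $f$ embeds in $f'$ along the diagonal $x\mapsto xxx$ (which the paper phrases as ``restrict to the promise $x=y=z$ and note tripling bits preserves these measures''), just spelled out per measure class. For part (a), the paper avoids the defect case analysis you flag as a bookkeeping hazard by starting directly from an assumed sensitive bit $i$: flipping $i$ (WLOG in the $x$ part) forces $f(y)=f(z)=1$, $p_y=p_z=:p$, and $x$ disagreeing with $p$ on exactly $i$, after which any other sensitive bit $j$ yields the contradictory conclusion that $x$ and $p$ disagree exactly on $j$ --- this reaches the same endpoint as your argument with less enumeration.
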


\begin{proof}
%
We start by upper bounding $\s_0(f')$.
Consider any $0$-input $xyz$ to $f'$
which has at least one sensitive bit. Pick a sensitive bit $i$
of this input; without loss of generality, this bit is inside
the $x$ part of the input. Since flipping $i$ changes $xyz$
to a $1$-input for $f'$, we must have $f(x^i)=f(y)=f(z)=1$ and
$p_{x^i}=p_y=p_z$. In particular, it must hold that $f(y)=f(z)=1$
and $p_y=p_z$. Let $p:=p_y$, so $p=p_z$ and $p=p_{x^i}$.
Since $f(xyz)=0$, it must be the case that $x$ is not consistent
with $p$. Since $p$ is consistent with $x^i$, it must be the case
that $p$ and $x$ disagree exactly on the bit $i$.

Now, it's clear that $xyz$ cannot have any sensitive bits inside
the $y$ part of the input, because then $x$ would not be consistent
with $p_z$. Similarly, $xyz$ cannot have sensitive bits in the $z$
part of the input. Any sensitive bits inside the $x$ part of the
input must make $x$ consistent with $p$; but $x$ disagrees with $p$
on bit $i$, so this must be the only sensitive bit. It follows that
the sensitivity of $xyz$ is at most $1$, as desired. We conclude
that $\s_0(f')=1$.

Next, we upper bound $\UC_{1}$. Define
$U':=\{ppp:p\in U\}\subseteq\{0,1,*\}^{3n}$.
We show that this is an
unambiguous collection of $1$-certificates for $f'$.
First, note that for $p\in U$, if $ppp\subseteq xyz$, then
$f(x)=f(y)=f(z)=1$ and $p_x=p_y=p_z=p$, so
$f'(xyz)=1$. Thus $U'$ is a set of $1$-certificates.
Next, if $xyz$ is a $1$-input for $f'$, then
$f(x)=f(y)=f(z)=1$ and $p_x=p_y=p_z$, which means
$p_xp_xp_x\subseteq xyz$. Since $p_x\in U$, we have
$p_xp_xp_x\in U'$.
Finally, if $ppp,qqq\in U'$ with $ppp\neq qqq$, then
$p\neq q$ and $p,q\in U$, which means $p$ and $q$
are inconsistent. This means $ppp$ and $qqq$ are inconsistent.
This concludes the proof that $U'$ is an
unambiguous collection of $1$-certificates for $f'$.
We have
$\max_{ppp\in U'} |ppp|=3\cdot\max_{p\in U} |p|=3\cdot \UC_{1}(f)$,
so $\UC_{1}(f')\leq 3\cdot \UC_{1}(f)$.

Finally, we show that almost all complexity measures do not decrease
in the transition from $f$ to $f'$. To see this, note that
we can restrict $f'$ to the promise that all inputs come from
the set $\{xyz\in\B^{3n}:x=y=z\}$. Under this promise, the function
$f'$ is simply the function $f$ with each input bit occurring
$3$ times. But tripling input bits in this way does not affect
the usual complexity measures (among the measures
defined in \sec{prelim}, sensitivity is the only exception),
and restricting to a promise can only cause them to decrease.
This means that $f'$ has higher complexity than $f$ under almost
any measure.
\end{proof}

We are now ready to prove Corollary~\ref{cortree}, which disproves Conjecture~\ref{conjtree}.

\cortree*

\begin{proof}
We will use the cheat sheet function $f:=\BKK_\CS$
from \cite{ABK15} that quadratically separates quantum
query complexity from exact degree. This function has $\Q(f) = \tOmega(\UCmin(f)^2)$, as shown
in \cite{AKK16}. Let $f'$ be the desensitized version of $f$, so that by Lemma~\ref{lem:desen}, $\s_0(f')=1$ and $\Q(f') = \tOmega(\s_1(f')^2)$. We finally observe that for every Boolean function with $\s_0(f')=1$, we have $\s(f')=\ts(f')$. This is due to the fact that when $\s_0(f')=1$ every sensitive tree of $f'$ is a star. 
\end{proof}

We now prove \thm{main}, which we restate here for convenience.

\main*

\begin{proof}
Fix $f:\B^n\to\B$ from the family
for which $\D(f)=\tOmega(\UCmin(f)^{1+\alpha})$.
By negating $f$ if necessary, assume
$\UC_{1}(f)=\UCmin(f)$. Apply the desensitizing transformation
to get $f'$. By \lem{desen}, we have $\s_0(f')\leq 1$ and
$\s_1(f')\leq\UC_{1}(f')\leq 3\UCmin(f)$, and also
$\D(f')\geq \D(f)$. We now consider the function
$\hat{f}:=\OR_{3\UCmin(f)}\circ f'$. It is not hard to see
that $\s_0(\hat{f})\leq 3\UCmin(f)$ and
$\s_1(\hat{f})=\s_1(f')\leq 3\UCmin(f)$, so
$\s(\hat{f})\leq 3\UCmin(f)$.

We now analyze $\D(\hat{f})$. We have $\D(f')\geq \D(f)$;
since deterministic query complexity satisfies a perfect
composition theorem, we have
\[\D(\hat{f})=\D(\OR_{3\UCmin(f)})\D(f')\geq 3\UCmin(f)\D(f)
=\tOmega(\UCmin(f)^{2+\alpha})=\tOmega(\s(\hat{f})^{2+\alpha}).\]
This concludes the proof for deterministic query complexity.

For other measures, we need the following properties:
first, that the measure is invariant under negating the function
(so that we can assume $\UCmin(f)=\UC_{1}(f)$ without loss of
generality); second, that the measure satisfies a composition
theorem, at least in the case that the outer function is $\OR$;
and finally, that the measure is large for the $\OR$ function.
We note that the measures $\C$, $\bs$, $\RC$, $\R$, and $\Q$
all satisfy a composition theorem of the form
$M(\OR\circ g)\geq \Omega(M(\OR)M(g))$; for the first three
measures, this can be found in \cite{GSS16}, for $\R$
it can be found in \cite{GJPW15}, and for $\Q$ it follows
from a general composition theorem \cite{Rei11,LMR+11}.
Moreover, $\bs(\OR_n)=\C(\OR_n)=\RC(\OR_n)=n$
and $\R(\OR_n)=\Omega(n)$. This completes the proof for
these measures; for $\Q$, we will have to work harder, since
$\Q(\OR_n)=\Theta(\sqrt{n})$.

For quantum query complexity, the trick will be to use
the function ``Block $k$-sum'' defined in \cite{ABK15}.
It has the property that all inputs have certificates
that use very few $0$ bits. Actually, we'll swap the $0$s
and $1$s so that all inputs have certificates that use very few $1$
bits. When $k=\log n$ (where $n$ the size of the input), we denote
this function by $\BSum_n$. \cite{ABK15} showed that
$\Q(\BSum_n)=\tOmega(n)$, and every input has a certificate with
$O(\log^3 n)$ ones.

Consider the function $\hat{f}:=\BSum_{\UCmin(f)}\circ f'$.
We have
$\Q(\hat{f})=\Q(\BSum_{\UCmin(f)})\Q(f')=\tOmega(\UCmin(f)\Q(f))$.
We now analyze the sensitivity of $\hat{f}$. Fix an input
$z$ to $\hat{f}=\BSum_{\UCmin(f)}\circ f'$. This input
consists of $\UCmin(f)$ inputs to $f'$, which, when evaluated,
form an input $y$ to $\BSum_{\UCmin(f)}$. Note that some of the
inputs to $f'$ correspond to sensitive bits of $y$ (with
respect to $\BSum_{\UCmin(f)}$); the sensitive bits of $z$
are  then simply the sensitive bits of those inputs.
Now, consider the certificate of $y$ that uses only
$O(\log^3 \UCmin(f))$ bits that are $1$. Since it is a certificate,
it must contain all the sensitive bits of $y$; thus at most
$O(\log^3 \UCmin(f))$ of the $1$ bits of $y$ are sensitive.
It follows that the number of sensitive bits of $z$ is at most
$\UCmin(f)\s_0(f')+O(\log^3\UCmin(f))\s_1(f')
=\tO(\UCmin(f))$. This concludes the proof.
\end{proof}

It is not hard to see that the same approach can yield separations
against bounded-size block sensitivity (where the blocks
are restricted to have size at most $k$). To do this,
we need the desensitizing construction to repeat the inputs
$2k+1$ times instead of $3$ times. Instead of increasing
to $3\UCmin(f)$, the bounded-size block sensitivity
would increase to $(2k+1)\UCmin(f)$, and the deterministic
query complexity would increase to $(2k+1)\D(f)$. When $k$
is constant, we get the same asymptotic separations as for
sensitivity.

We now construct separations against $\UCmin$. This proves
\cor{main1} and \cor{main2}.

\cormaina*
\begin{proof}
By \thm{main}, it suffices to construct a family of functions with
 $\Q(f)=\tOmega(\UCmin(f)^2)$. Our function will be a cheat sheet function $\BKK_\CS$
from \cite{ABK15} that quadratically separates quantum
query complexity from exact degree. This function has quantum
query complexity quadratically larger than $\UCmin$, as shown
in \cite{AKK16}.
\end{proof}

\cormainb*

\begin{proof}
In \thm{UC1vsC-new}, we construct a family of functions with
$\C(f)=\tOmega(\UCmin(f)^{{\frac{\log(38/3)}{\log(8)}}})$.
Thus, by \thm{main}, we can construct a family of functions with
$\C(f)=\tOmega(\s(f)^{1+{\frac{\log(38/3)}{\log(8)}}}) = \Omega(\s(f)^{2.22})$.
\end{proof}



\section{Improved separation between $\UC_1$ and $\CC$}\label{sec:main2}
In this section we prove \thm{UC1vsC-new},
building on the proof by \cite{Goos15}. Our main contribution is to show how to adapt the argument in \cite{Goos15} to allow for fractional weights. We finally give a fractional weighting scheme that leads to our improved separation.  
We observe that in order to obtain our final result, one can just take \Goos's construction and reweight it in the end. Nonetheless, we include the full details here to show that any gadget with a separation between $\UC_1$ and $\C$ implies an asymptotic separation (which was not explicit in \cite{Goos15}).

Throughout the section, $\Sigma$ and $\Gamma$ will denote finite sets that correspond to  input and output alphabets of our functions. We shall assume that $0$ is not in $\Sigma$, and will discuss functions $f: (\{0\} \cup \Sigma)^n \to \Gamma$ where $0$ is a special symbol treated differently than others.

\subsection{Certificates and Weighted-Certificates for Large-Alphabet Functions}

We generalize the definition of certificates from Boolean functions to functions with arbitrary input and output alphabets.

\begin{definition}[Multi-valued Certificates, Simple Certificates]
A {\sf certificate} for a function $f: (\{0\} \cup \Sigma)^n \to \Gamma$ is a cartesian product of sets $S_1 \times S_2 \times \ldots S_n$ where each $S_i \subseteq \{0\} \cup \Sigma$ is a non-empty set and such that all $y \in S_1 \times S_2 \times \ldots S_n$ have the same $f$-value.

A {\sf simple certificate} for $f$ is a certificate where each $S_i$ is either: (i) $\{0\} \cup\Sigma$, or (ii) $S_i$ contains exactly one element, and this element is from $\Sigma$ (i.e., not the $0$ element).%
\footnote{Note that a certificate for ``$f(x)=1$'' for a Boolean function $f: \B^n \to \B$ is always simple.}

We define the {\sf size} of a certificate as the number of $i$'s such that $S_i \neq (\{0\} \cup \Sigma)$.
For $x \in (\{0\} \cup \Sigma)^n$, we denote by $\CC(f, x)$ the  size of the smallest certificate for $f$ which contains $x$.

For a set $T\subseteq \Gamma$ we say that $S_1 \times \ldots \times S_n$ certifies that ``$f(\cdot) \in T$'' if this is true for any $y \in S_1 \times \ldots \times S_n$. When $T = \Gamma \setminus \{i\}$ we write ``$f(\cdot) \neq i$'' for shorthand.
\end{definition}

\begin{definition}[Weight Schemes, Certificate Weights]
\label{def:weight-scheme}
Let $w:\Sigma \to \RR^{+}$ be a non-negative weight function.
A {\sf weight scheme} is a mapping, $w$, associating positive real numbers to non-empty subsets of $\{0\} \cup \Sigma$ such that:
\begin{enumerate}	
\item If $S=\{i\}$, for some $i \in \Sigma$, then the weight of $S$ is $w(i)$.
\item If $S = (\{0\} \cup \Sigma)$, then the weight of $S$ is $0$.
\item \label{item:weight S with 0} If $0 \in S$ an $S \neq  (\{0\} \cup \Sigma)$, then the weight of $S$ is $\max_{i \in \Sigma \setminus S}\{w(i)\}$.
(In particular, if $S=\{0\}$ then the weight of $S$ is $\max_{i \in \Sigma}\{w(i)\}$.)
\end{enumerate}
(Note that we did not specify the weight of sets $S$ of at least two elements which do not contain $0$, as they will not be used in our analysis.)

The weight of a certificate $S_1 \times \ldots \times S_n$ is simply $\sum_{i=1}^{n}{w(S_i)}$.
For a function $f: (\{0\} \cup \Sigma)^n \to \Gamma$ and an input $x\in(\{0\} \cup \Sigma)^n$ we define the certificate complexity $\CC((f,w),x)$ to be the minimal weight of a certificate $S_1 \times \ldots \times S_n$ for $f$ according to $w$, such that $x \in S_1 \times \ldots \times S_n$.
\end{definition}

\begin{definition}[Realization of Weight Schemes]
	The weight-scheme defined by an {\sf integer-valued} weight function $w: \Sigma \to \N$ is {\sf realized} by $g_w: (\{0\} \cup \Sigma)^{m} \to (\{0\} \cup \Sigma)$ if:
\begin{description}
\item[(i)] For $i\in \Sigma$, there exists a collection of unambiguous  certificates of size-$(w(i))$ for  ``$g_w(\cdot)=i$'',
\item[(ii)] $g_w(0^m) = 0$, and  
\item[(iii)] In order to prove ``$g_w(0^m) \in S$'' it is required to expose at least $w(S)$ coordinates of $0^m$.
\end{description}
\end{definition}

\begin{lemma}[Weight-Scheme Implementation, \cite{Goos15}]
\label{lem:weight-implementation}Let  $w: \Sigma \to \N$ be an integer-valued weight function. Then, there exists a weight scheme associating natural numbers to non-empty subsets of $\{0\} \cup \Sigma$ that can be realized by a function $g_w:(\{0\} \cup \Sigma)^m \to (\{0\} \cup \Sigma)$ where $m = \max_{i}\{w_i\}$.
\end{lemma}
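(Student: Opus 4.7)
The plan is to construct $g_w$ explicitly by giving, for each $i\in\Sigma$, a collection $\mathcal{P}_i$ of partial assignments in $(\{0\}\cup\Sigma)^m$ of size exactly $w(i)$, and defining $g_w(x)=i$ if $x$ extends some $p\in\mathcal{P}_i$ (necessarily unique), and $g_w(x)=0$ otherwise. We require that (a) the elements of each $\mathcal{P}_i$ are pairwise inconsistent, (b) elements of $\mathcal{P}_i$ and $\mathcal{P}_j$ are pairwise inconsistent for $i\neq j$, and (c) the nonzero-signature sets $A_p:=\{a:p_a\in\Sigma\}$ for $p\in\mathcal{P}_i$ form a family with minimum hitting set at least $w(i)$.

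Given such $\mathcal{P}_i$, properties (i) and (ii) are immediate: the collections $\mathcal{P}_i$ witness the unambiguous certificates for ``$g_w(\cdot)=i$'' of size $w(i)$, and $0^m$ extends no $p$ because every $p$ fixes some coordinate to a value in $\Sigma$. For property (iii), fix a certificate $q$ consistent with $0^m$ that proves $g_w\in S$; then for every $j\notin S$ and every $p\in\mathcal{P}_j$, the partial assignments $q$ and $p$ are inconsistent (otherwise an extension of $q\cup p$ would have $g_w=j\notin S$). As $q$ only fixes coordinates to $0$, inconsistency with $p$ forces $q$ to set some coordinate of $A_p$ to $0$. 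Hence the coordinates fixed by $q$ form a hitting set for $\{A_p:p\in\mathcal{P}_j,\,j\notin S\}$, and by (c) we conclude $|q|\geq\max_{j\notin S}w(j)=w(S)$.

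The main obstacle is to construct $\mathcal{P}_i$ satisfying (a)--(c) simultaneously on a coordinate space of size $m=\max_i w(i)$. Each $p\in\mathcal{P}_i$ splits as $(A_p,B_p)$ with $A_p\cap B_p=\emptyset$ and $|A_p|+|B_p|=w(i)$, where $A_p$ records the coordinates set to $i$ and $B_p$ those set to $0$. The $A_p$ must be spread so that no set of size less than $w(i)$ hits them all, while the $B_p$ must be large enough to enforce pairwise inconsistency both within $\mathcal{P}_i$ and across different $\mathcal{P}_j$. When $w(i)$ is close to $m$, a cyclic shift design suffices (each $A_p$ is a singleton and $B_p$ is a cyclic interval of length $w(i)-1$); for smaller $w(i)$, a more intricate balanced design is required, as in the projective plane construction of \cite{Goos15}. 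Verifying that such a design achieves (a)--(c) for arbitrary integer $w$ is the combinatorial heart of the proof.
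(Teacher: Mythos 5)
Your framework is reasonable up to a point: the reduction of property (iii) to a hitting-set lower bound for the families $\{A_p\}$ is a correct observation, and the requirements (a)--(c) on the $\mathcal{P}_i$ would indeed suffice. But the proof stops exactly where the work is: you never construct the $\mathcal{P}_i$. Your final paragraph explicitly concedes this ("Verifying that such a design achieves (a)--(c) for arbitrary integer $w$ is the combinatorial heart of the proof"), and the reference to "the projective plane construction of \cite{Goos15}" is a misdirection --- in the paper, projective planes are used to build the outer gadget $h$ in Section~\ref{sec:projectiveplane}, not to realize the weight scheme. As written, this is an incomplete proof with the essential construction deferred.

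The actual construction is far simpler than a balanced design and requires no case split on the magnitude of $w(i)$. Define $g_w(x)=i$ iff $i$ is the \emph{first} non-zero symbol appearing in $x$ and it appears at some position $j\le w(i)$; set $g_w(x)=0$ otherwise. Then (ii) is immediate; (i) follows because the decision tree reading $x_1,x_2,\ldots$ until the first non-zero symbol (truncated at $w(i)$ queries) yields an unambiguous collection of size-$\le w(i)$ simple certificates for "$g_w(\cdot)=i$"; and (iii) holds because if any of the first $w(j)$ positions is left unexposed for the $j\in\Sigma\setminus S$ of maximum weight, the input could still have $g_w(\cdot)=j$, so one must expose at least $\max_{j\notin S}w(j)=w(S)$ coordinates. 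In your language, $\mathcal{P}_i$ consists of the partial assignments $0^{j-1}\,i\,\ast^{m-j}$ for $1\le j\le w(i)$; each $A_p$ is a singleton $\{j\}$ with $j\le w(i)$, the union over $p\in\mathcal{P}_i$ covers $\{1,\ldots,w(i)\}$, so a hitting set needs $w(i)$ elements, and the "first non-zero symbol" semantics automatically enforce (a) and (b) across all $i$ simultaneously. You should replace the vague design argument with this explicit one-line definition.
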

\begin{proof}
We define $g_w(x) = i$ iff the symbol $i$ appears in the first $w(i)$ coordinates and $i$ is the first non-zero symbol to appear in the string.
We set $g_w(x) = 0$ if there is no such $i\in \Sigma$.
(ii) holds trivially.
For (i) note that the decision tree that queries the first $w(i)$ coordinates induces an unambiguous collection of certificates for ``$g_w(\cdot)=i$''.
For (iii) we may assume without loss of generality that $S \neq (\{0\} \cup \Sigma)$ as otherwise the claim is trivial. Since we are proving that ``$g_w(0^m) \in S$'' and indeed $g_w(0^m)=0$ it is required that $0\in S$.
It remains to show that it is required to expose the first $\max_{i\in \Sigma \setminus S}{w(i)}$ coordinates of the input to $g_w$. 
Let $i$ be the element in $\Sigma \setminus S$ with maximal weight.
Indeed, if one coordinate in the first $w(i)$ coordinates was not exposed, then it is still possible that $g_w(\cdot) = i$, as all coordinates that were exposed are equal to $0$ and there is an unexposed position in the first $w(i)$ coordinates that might be marked with $i$. 
\end{proof}

\subsection{Composing Functions over Large Alphabet with Fractional Weights} \label{sec:composition}
Most of the results below are generalizations of arguments from \cite{Goos15}. However, since unlike \cite{Goos15} we deal with fractional weights, in addition to the total weight, we also need to take into account the number of coordinates in the intermediate certificates. 

Let $f:(\{0\} \cup \Sigma)^N \to \{0,1\}$, and let $w : \Sigma \to \RR^{+}$.
We treat the pair $(f,w)$ as a ``weighted function''.
Let $\mathcal{C}$ be an unambiguous collection of simple $1$-certificates of size-$s$ and weight at most $W$ for $(f,w)$. 
Let $\Sigma_0$ be a finite set that does not contain $0$ and  $w_0 : \Sigma_0 \to \RR^{+}$. We define $(\tilde{f}, \tilde{w})$,  where $\tilde{f}: (\{0\} \cup \Sigma \times \Sigma_0)^{N} \to (\{0\} \cup \Sigma_0)$ as follows. Denote by $\pi_1(x)$ and $\pi_2(x)$ the projection of $x\in (\{0\}\cup \Sigma \times \Sigma_0)^N$ to its $(\{0\}\cup \Sigma)^N$ coordinates and its $(\{0\}\cup \Sigma_0)^N$ coordinates respectively. The value of $\tilde{f}(x)$ is defined as follows. 

\begin{quote}
	If $f(\pi_1(x)) = 0$, then set $\tilde{f}(x) :=0$. Otherwise, let $\mathcal{T} \in \mathcal{C}$ be the unique certificate for ``$f(\cdot)=1$'' on $\pi_1(x)$. Read the corresponding coordinates of $\mathcal{T}$ from $\pi_2(x)$ and if all of them are equal to some $i\in \Sigma_0$, then set $\tilde{f}(x) :=i$; otherwise set $\tilde{f}(x) :=0$. 
\end{quote}
Let $\tilde{w}: \Sigma \times \Sigma_0 \to \RR^{+}$ be defined as $\tilde{w}(\sigma, i) =  w(\sigma) \cdot w_0(i)$. The following lemma shows useful bounds on the certificates of the new function $\tilde{f}$ according to $\tilde{w}$.

\begin{lemma}[From Boolean to Larger Output Alphabet]\label{lem:larger output}
Let $\tilde{f}$, $f$, $\tilde{w}$,  $w$ and $w_0$ be defined as above. Then,
\begin{description}
	\item[(B1)] There is an unambiguous collection of simple size-$s$ certificates for ``$\tilde{f}(\cdot) = i$'' with weight at most $w_0(i) \cdot W$ according to $\tilde{w}$.
	\item[(B2)] The certificate complexity of ``$\tilde{f}(0^N) \neq i$'' with respect to $\tilde{w}$ is at least $w_0(i) \cdot \CC((f,w),0^N)$.
	\end{description}
\end{lemma}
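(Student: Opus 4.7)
For (B1), my plan is to lift each certificate $\mathcal{T}=T_1\times\cdots\times T_N\in\mathcal{C}$ directly. Since $\mathcal{T}$ is simple, it has exactly $s$ distinguished positions on which $T_j=\{\sigma_j\}$ for some $\sigma_j\in\Sigma$, and $T_j=\{0\}\cup\Sigma$ elsewhere. I would define $\tilde{\mathcal{T}}^{(i)}$ by replacing each singleton $\{\sigma_j\}$ with the pair-singleton $\{(\sigma_j,i)\}\subseteq\Sigma\times\Sigma_0$ and each free coordinate by the full alphabet $\{0\}\cup(\Sigma\times\Sigma_0)$. Then $\tilde{\mathcal{T}}^{(i)}$ is simple of size $s$, has weight $\sum_{j}w(\sigma_j)\,w_0(i)=w_0(i)\cdot w(\mathcal{T})\le w_0(i)\cdot W$, and certifies ``$\tilde{f}(\cdot)=i$'' immediately from the definition of $\tilde{f}$. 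The family $\{\tilde{\mathcal{T}}^{(i)}:\mathcal{T}\in\mathcal{C}\}$ covers every $x$ with $\tilde{f}(x)=i$ via the unique $\mathcal{T}$ consistent with $\pi_1(x)$, and unambiguity transfers from $\mathcal{C}$: two distinct lifts disagree either on the $\Sigma$-projection (impossible by unambiguity of $\mathcal{C}$) or on the $\Sigma_0$-value at some specified coordinate.

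For (B2), I would compress any certificate $\tilde{\mathcal{S}}=\tilde{S}_1\times\cdots\times\tilde{S}_N$ for ``$\tilde{f}(0^N)\neq i$'' into a certificate $\mathcal{S}=S_1\times\cdots\times S_N$ for $f$ at $0^N$ by setting
\[
S_j \;:=\; \{0\}\cup\{\sigma\in\Sigma:(\sigma,i)\in\tilde{S}_j\}.
\]
To check that $\mathcal{S}$ is a valid certificate, I assume for contradiction some $y\in\mathcal{S}$ has $f(y)=1$. The unique $\mathcal{T}\in\mathcal{C}$ consistent with $y$ pins $y_j=\sigma_j\in\Sigma$ at its $s$ positions, and by construction each $\sigma_j\in S_j$ forces $(\sigma_j,i)\in\tilde{S}_j$ there. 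Off $\mathcal{T}$, I lift $y_j$ to $(y_j,i)\in\tilde{S}_j$ when $y_j\in\Sigma$ (again using the definition of $S_j$), and to $0\in\tilde{S}_j$ when $y_j=0$ (using $0^N\in\tilde{\mathcal{S}}$). The resulting $\tilde{y}\in\tilde{\mathcal{S}}$ has $\pi_2(\tilde{y})\equiv i$ on $\mathcal{T}$'s coordinates, so $\tilde{f}(\tilde{y})=i$, contradicting $\tilde{\mathcal{S}}$.

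The weight bound I would prove coordinate by coordinate, showing $w(S_j)\le\tilde{w}(\tilde{S}_j)/w_0(i)$ at every $j$. By item~\ref{item:weight S with 0} of Definition~\ref{def:weight-scheme}, $w(S_j)=\max_{\sigma'\notin S_j}w(\sigma')=\max_{(\sigma',i)\notin\tilde{S}_j}w(\sigma')$; denoting the maximizer by $\sigma^*$, the fact that $(\sigma^*,i)\notin\tilde{S}_j$ yields $\tilde{w}(\tilde{S}_j)\ge w(\sigma^*)\cdot w_0(i)=w(S_j)\cdot w_0(i)$. The edge cases (either $\tilde{S}_j$ or $S_j$ equal to the full alphabet) both reduce to $0\le \tilde{w}(\tilde{S}_j)/w_0(i)$. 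Summing over $j$ gives $\CC((f,w),0^N)\le w(\mathcal{S})\le\tilde{w}(\tilde{\mathcal{S}})/w_0(i)$, which is what (B2) asserts.

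I expect the main obstacle to be not the overall strategy but the careful bookkeeping around Definition~\ref{def:weight-scheme}: the weight of a $0$-containing proper subset is governed by its \emph{heaviest missing symbol}, which is exactly what makes the coordinate-wise inequality $w(S_j)\le\tilde{w}(\tilde{S}_j)/w_0(i)$ hold with the right $w_0(i)$ factor. Getting the degenerate cases (free coordinates and coordinates where every $\sigma$ survives in $\tilde{S}_j$ with the tag $i$) to behave correctly under the non-standard weight rule will be the most delicate point.
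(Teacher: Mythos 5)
Your proposal is correct and takes essentially the same approach as the paper's proof: (B1) lifts each simple certificate in $\mathcal{C}$ by tagging its fixed coordinates with $i$, and (B2) compresses a certificate for ``$\tilde f(0^N)\neq i$'' to a certificate for ``$f(0^N)=0$'' by restricting to the $\Sigma\times\{i\}$ slice and unprojecting, then bounds the weight via property~3 of the weight-scheme definition. The only difference is presentational — the paper splits (B2) into an intermediate step through the restricted function $\hat f$ and asserts the resulting certificate's validity, whereas you unproject in one step and verify validity explicitly by lifting a hypothetical $1$-input; the coordinate-wise weight inequality $w(S_j)\le\tilde w(\tilde S_j)/w_0(i)$ is the same in both.
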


\begin{proof}
	\begin{description}
	\item[(B1)] The unambiguous collection of simple $1$-certificates for $f$ corresponds to unambiguous collection of simple $i$-certificates for $\tilde{f}$ by checking that each queried symbol has its $\Sigma_0$-part equals $i$. The weight of each certificate in the collection is at most $w_0(i) \cdot W$ as each coordinate weighs $w_0(i)$ times its ``original'' weight.

		\item[(B2)] Fix $i\in \Sigma_0$. Assume we have a certificate for ``$\tilde{f}(0^n) \neq i$''. This is a cartesian product $S_1 \times \ldots \times S_N$ such that each $S_i$ contains the $0$ symbol and under which $\forall{x\in S_1 \times \ldots \times S_N}$ it holds that $\tilde{f}(x) \neq  i$.
		Take $\hat{f}$ to be $\tilde{f}$ restricted	only to input alphabet $\{0\} \cup (\Sigma \times \{i\})$.
		Then $S'_1 \times \ldots \times S'_n$ where $S'_j = S_j \cap (\{0\} \cup (\Sigma \times \{i\}))$ is a certificate for ``$\hat{f}(0^N)\neq i$''.  Using property~\ref{item:weight S with 0} in Definition~\ref{def:weight-scheme}, we show that $w(S'_j) \le w(S_j)$.
		We consider two cases. If $S'_j = \{0\} \cup (\Sigma \times \{i\})$, then $w(S'_j) = 0 \le w(S_j)$.
		Otherwise, $$w(S'_j) = \max_{\sigma\in (\Sigma \times \{i\})\setminus S'_j} {w(\sigma)} \le \max_{\sigma \in (\Sigma \times \Sigma_0) \setminus  S_j}{w(\sigma)} = w(S_j).$$

		However, proving that ``$\hat{f}(0^N) = 0$'' is equivalent to proving that ``$f(0^N) = 0$'', except for the reweighting. Since each coordinate weighs according to $\tilde{w}$ at least $w_0(i)$ times its weight according to $w$, the weight of the certificate $S_1 \times \ldots \times S_n$ is at least $w_0(i) \cdot C((f,w),0^N)$.\qedhere
		\end{description}
\end{proof}
%

\begin{lemma}[Composition Lemma]\label{lem:composition}
	Let $h: (\{0\} \cup [k])^{n} \to \{0,1\}$ with $h(0^n) = 0$, and let $w_0:[k] \to \RR^{+}$ such that $(h,w_0)$ has an unambiguous collection of simple $1$-certificates of size-$k$ and (fractional) weight $u$, however any certificate for ``$h(0^n)=0$'' is of (fractional) weight $v$.
	
	Let $\tilde{f}$ and $\tilde{w}$ be as defined above. Let $f' : (\{0\} \cup \Sigma \times [k])^{n \times N} \to \{0,1\}$ be defined by $f' = h \circ \tilde{f}$, and $w': (\{0\} \cup \Sigma \times [k]) \to \N$ be equal to $\tilde{w}$. Then,
	\begin{description}
	\item[(A1)] $(f',w')$ has an unambiguous collection of simple certificates $1$-certificates with size at most $sk$ and weight at most $u \cdot W$.
	\item[(A2)] 
	$\CC((f',w'),0^{Nn}) \ge v \cdot \CC((f,w), 0^N)$.
\end{description}
\end{lemma}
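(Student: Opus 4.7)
The plan is to prove (A1) and (A2) by composing certificates at two levels---an outer $h$-certificate and, for each of its fixed coordinates, an inner $\tilde{f}$-certificate---using (B1) and (B2) from \lem{larger output} as black boxes. The structure of weight schemes, particularly item~(3) of Definition~\ref{def:weight-scheme}, is what will make the weights compose cleanly.

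For (A1), I would take the given unambiguous collection $\mathcal{H}$ of simple size-$k$ $1$-certificates of weight at most $u$ for $(h, w_0)$. A simple certificate $\mathcal{T} \in \mathcal{H}$ fixes some set $I \subseteq [n]$ of outer coordinates (with $|I| \leq k$) to specific values $i_j \in [k]$. For each $j \in I$ I would plug in a certificate from the unambiguous collection supplied by (B1) for ``$\tilde{f}(\cdot) = i_j$'' of size $s$ and weight at most $w_0(i_j) \cdot W$; the remaining blocks stay trivial. This yields simple combined certificates of size at most $ks$ and weight at most $\sum_{j \in I} w_0(i_j) \cdot W \leq u \cdot W$. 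Unambiguity then follows in two stages: the outer unambiguity of $\mathcal{H}$ uniquely selects $\mathcal{T}$ given the tuple $(\tilde{f}(x_j))_j$, and the inner unambiguity of each (B1)-collection uniquely selects the sub-certificate in each fixed block.

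For (A2), the more delicate direction, I would start with an arbitrary certificate $S_1 \times \cdots \times S_{Nn}$ for ``$f'(0^{Nn}) = 0$'' and group its coordinates into $n$ blocks of $N$. Let $T_j \subseteq \{0\} \cup [k]$ denote the range of $\tilde{f}$ on the $j$th block of the certificate. The key observation is that $T_1 \times \cdots \times T_n$ is itself a certificate for ``$h(0^n) = 0$'': each $T_j$ contains $0$ because $0^N$ lies in the block and $\tilde{f}(0^N)=0$, and any tuple in the product is realized by some consistent input, forcing $h$ to vanish. Hence by hypothesis its $w_0$-weight is at least $v$. It therefore suffices to prove that the $w'$-weight of the $j$th block sub-certificate is at least $w_0(T_j) \cdot \CC((f,w), 0^N)$; summing over $j$ then gives the claim.

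For that block-level bound, I would split into cases. If $T_j = \{0\} \cup [k]$, then $w_0(T_j) = 0$ and nothing is required. Otherwise $T_j$ is a proper subset of $\{0\} \cup [k]$ containing $0$, so by Definition~\ref{def:weight-scheme}~(3) we have $w_0(T_j) = \max_{i \in [k] \setminus T_j} w_0(i)$; picking $i^*$ attaining this maximum, the block sub-certificate certifies ``$\tilde{f}(0^N) \neq i^*$'', and (B2) yields the desired weight bound $w_0(i^*) \cdot \CC((f,w), 0^N)$. The main subtle point---and the reason why weight schemes are designed the way they are---is aligning the ``effective'' weight of a block sub-certificate with the outer weight $w_0(T_j)$; this alignment is precisely what makes the composition multiplicative and is what forces the max-rule for subsets containing $0$.
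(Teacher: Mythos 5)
Your proof is correct and follows essentially the same approach as the paper: (A1) by substituting (B1)-certificates for the fixed coordinates of an outer $h$-certificate, and (A2) by projecting an arbitrary $f'$-certificate to the $n$ blocks, forming the product of reachable $\tilde{f}$-values (your $T_j$ equals the paper's $S_i$), verifying it is an $h$-certificate for $0^n$, and invoking (B2) together with the max-rule of Definition~\ref{def:weight-scheme} to lower bound each block's weight. One small note: where the paper's proof of (A2) cites ``Lemma~\ref{lem:larger output}[B1],'' it means (B2), which you correctly use.
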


\begin{proof}
	\begin{description}
	\item[(A1)] Take the unambiguous collection $\mathcal{C}$ of simple $1$-certificates for $h$ of size-$k$ and (fractional) weight $u$.
	For any certificate $\mathcal{T}$ from $\mathcal{C}$ replace the verification that some coordinate equals $i$ with the simple certificate that the relevant $N$-length input of $\tilde{f}$ belongs to $\tilde{f}^{-1}(i)$. The cost of each such certificate to $\tilde{f}$ will be at most $W \cdot w_0(i)$ according to $\tilde{w} \equiv w'$.
	Thus, the overall cost will be $W \cdot u$, and the certificates will be of size at most $sk$. It is easy to verify that these certificates are unambiguous, since unambiguous collections of simple certificates are closed under composition.
	
	\item[(A2)] 
		Let $\mathcal{T}$ be a certificate for ``$f'(0^{N\cdot n})=0$'' of minimal weight (according to $w'$), and let $w_{\mathcal{T}}$ be its weight.
		Let $\mathcal{T}_1, \ldots, \mathcal{T}_n$ be the substrings of $\mathcal{T}$ of length $N$ according to the composition of $h \circ \tilde{f}$.
		By Lemma~\ref{lem:larger output}[B1], if $\mathcal{T}_i$ certifies that ``$\tilde{f}(0^N) \neq j$'', then it costs at least $w_0(j) \cdot \CC((f,w),0^N)$.	
		We construct a certificate $\mathcal{H}$ for $h$ from $\mathcal{T}$.
		If $\mathcal{T}_i$ certifies that $\tilde{f}(0^N) \neq j$ then $(\mathcal{H})_i \neq j$. More formally, let $\mathcal{H} = S_1 \times \ldots \times S_n$, where for $i\in[n]$ the set $S_i$ consists of $\{0\}$ union with all $j$ such that $\mathcal{T}_i$ {\em does not} certify that $\tilde{f}(0^N) \neq j$.
		Suppose by contradiction that $\mathcal{H}$ does not certify that ``$h(0^n)=0$''. Then, there exists an input $y \in S_1 \times \ldots \times S_n$ (i.e., an input consistent with $\mathcal{H}$) such that $h(y)=1$. Thus, there exist inputs $x^{(1)}, \ldots, x^{(n)}$ each of length $N$ such that $\tilde{f}(x^{(i)}) = y_i$ and $\mathcal{T}_i$ is consistent with $x^{(i)}$, which shows that $\mathcal{T}$ is not a certificate for $h \circ \tilde{f}_{\tilde{w}}$.
		Thus, $\mathcal{H}$ is a certificate for $h(0^n)=0$, and we get that 
		\[w_\mathcal{T} \ge w_0(\mathcal{H}) \cdot \CC((f,w),0^N) = v \cdot \CC((f,w),0^N).\qedhere\]
	\end{description}
\end{proof}

Next, we show how to take any ``gadget'' $h$ -- a function over a constant number of symbols -- with some gap between the $\UC_1(h)$ and $\C(h,0^n)$, and convert it into an infinite family of functions with a polynomial separation between $\UC_1$ and $\C$.

\begin{theorem}[From Gadgets to Boolean Unweighted Separations]\label{thm:gadget to separation}
Let $u, v \in \RR$, $k\in \N$ be constants such that $1 \le k \le u < v$.
Let $h: (\{0\} \cup [k])^{n} \to \{0,1\}$ with $h(0^n) = 0$, and let $w_0:[k] \to \RR^{+}$ such that $(h,w_0)$ has an unambiguous collection of simple $1$-certificates of size-$k$ and (fractional) weight $u$, however any certificate for ``$h(0^n)=0$'' is of (fractional) weight $v$.

Then, there exists an infinite family of Boolean functions $\{h'_m\}_{m\in \N}$ with
\begin{enumerate}
	\item $\UC_1(h'_m) \le \poly(m) \cdot u^m$
	\item $\CC(h'_m) \ge v^m$
	\item $h'_m$ is defined over $\poly(m)\cdot \exp(O(m))$ many bits.
\end{enumerate}
\end{theorem}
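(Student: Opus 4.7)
\emph{Plan.} I will assemble $h'_m$ in four stages: (i) iterate the Composition Lemma to build a weighted alphabet function with the desired multiplicative blow-up, (ii) round the accumulated fractional weights up to integers, (iii) realize those integer weights as an unweighted alphabet function via Lemma~\ref{lem:weight-implementation}, and (iv) encode the alphabet into Boolean bits.

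For (i), set $F_1 := h$, $w_1 := w_0$, and inductively let $\tilde F_i$ arise from $F_i$ via Lemma~\ref{lem:larger output} with a fresh copy of $(h, w_0)$, then set $F_{i+1} := h \circ \tilde F_i$ with the product weight $w_{i+1}$. By $m-1$ applications of Lemma~\ref{lem:composition}, $(F_m, w_m)$ is a Boolean-output function on $n^m$ positions over alphabet $\{0\} \cup [k]^m$, admitting an unambiguous family of simple $1$-certificates of size at most $k^m$ and weight at most $u^m$, and such that every certificate for ``$F_m(\mathbf{0})=0$'' has weight at least $v^m$. For (ii), define $\hat w_m(\sigma) := \lceil w_m(\sigma) \rceil$; pointwise $\hat w_m \ge w_m$ preserves the $v^m$ lower bound on $0$-certificate weight, while $\hat w_m \le w_m + 1$ bounds any $1$-certificate's weight by $u^m + k^m = O(u^m)$, using $k \le u$.

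For (iii), apply Lemma~\ref{lem:weight-implementation} to $\hat w_m$ to obtain $g := g_{\hat w_m}: (\{0\}\cup\Sigma_m)^{m_0} \to (\{0\}\cup\Sigma_m)$ with $\Sigma_m = [k]^m$ and $m_0 = O(u^m)$, and set $H_m := F_m \circ g$. Each simple $1$-certificate of $F_m$, expanded using the $g$-certificates from property (i), becomes an unambiguous simple $1$-certificate of $H_m$ whose number of alphabet queries equals the $\hat w_m$-weight of the underlying $F_m$-certificate, so $\UC_1(H_m) \le O(u^m)$ alphabet queries. Conversely, property (iii) of $g$ shows that any certificate for ``$H_m(\mathbf{0}) = 0$'' requires at least $\hat w_m(S_j)$ alphabet queries inside the $j$-th $g$-block whose induced alphabet-symbol set is $S_j$, totaling at least the $\hat w_m$-weight of the induced $F_m$-certificate, which is at least $v^m$. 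For (iv), encode each alphabet position in binary using $b = \lceil \log_2(k^m+1)\rceil = O(m)$ Boolean bits, and let $h'_m$ output $0$ on any input containing an invalid block. Then $\UC_1(h'_m) \le b\cdot O(u^m) = \poly(m)\cdot u^m$, the total number of Boolean input bits is $b\cdot n^m\cdot m_0 = \poly(m)\cdot \exp(O(m))$, and any Boolean $0$-certificate touches at least as many blocks as its induced alphabet certificate has constrained positions, yielding $\C(h'_m) \ge v^m$.

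\emph{Main obstacle.} The only delicate point is preserving the $0$-certificate lower bound through Booleanization: a single Boolean query could in principle rule out a very heavy alphabet symbol, so alphabet \emph{weight} bounds would not transfer to Boolean queries directly. The fix is that composing with $g$ in step (iii) already converts the weight lower bound into a lower bound on the alphabet-query \emph{count} of $H_m$, which then transfers losslessly to Boolean bit-queries in step (iv). This is exactly the point of working with fractional weights in the intermediate construction: the rounding error in step (ii) costs only an additive $k^m$ in $\UC_1$, absorbed into $O(u^m)$ since $k \le u$, whereas any attempt to keep the weights integer throughout the iteration would incur a much larger multiplicative loss.
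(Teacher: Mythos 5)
Your proposal is correct and follows essentially the same four-stage decomposition as the paper's own proof: iterate the Composition Lemma, round the fractional weights up via ceilings, realize the rounded integer weights with $g_{w'}$ from the Weight-Scheme Implementation lemma, and then Booleanize with a surjection from $O(m)$ bits per alphabet position. One tiny imprecision is your claim that the alphabet-gadget arity is $m_0 = O(u^m)$; it is actually $O\bigl(w_{0,\max}^m\bigr)$ where $w_{0,\max}=\max_i w_0(i)$, which need not be bounded by $u$ in general, but since $w_{0,\max}$ is a constant of the gadget this is still $\exp(O(m))$ and the stated input-length bound survives unchanged.
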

\begin{proof}
We start by defining a sequence of weighted functions $\{(h_m,w_m)\}_{m\in \N}$ over large alphabet size with a polynomial gap between $\UC_1$ and $\CC$. We then convert these functions into unweighted Boolean functions with the desired properties.

We take $h_1 := h$ and $w_1 := w_0$.
For $m\ge 2$ we take $(h_m,w_m)$ to be the composition of $(h,w_0)$ with $(\tilde{h}_{m-1},\tilde{w}_{m-1})$.
Let $\Sigma_m = [k]^m$. Then, $h_m : (\{0\} \cup \Sigma_m)^{n^m} \to \{0,1\}$ and $w_m : (\{0\} \cup \Sigma_m) \to \RR^{+}$.
Using Lemma~\ref{lem:composition}, we have that 
\begin{description}
	\item[(i)] The maximal weight in $w_{m}$ is at most $(w_{0,max})^m$, where $w_{0,max}:=  \max_i\{w_0(i)\}$.
	\item[(ii)] \label{item:UC composed} 
	There exists an unambiguous collection of simple $1$-certificates of size $k^m$ and weight at most $u^m$ for $(h_m,w_m)$.
	\item[(iii)] $\CC((h_m,w_m),\vec{0}) \ge v^m$.
\end{description}

\paragraph*{Making Weights Integral.}
First, we modify the weights so that they will be integral. 
We take $w'_m(\cdot)$ to be $\lceil{w_m(\cdot)\rceil}$.
Taking ceiling on the weights may only increase the certificate complexities. Thus, $\CC((h_m,w'_m),\vec{0}) \ge v^m$.
On the other hand, the weight of any certificate may only increase additively by its size, hence $\UC_1((h_m, w'_m)) \le u^m + k^m \le 2u^m$.

\paragraph*{Eliminating Weights.}
Next, we convert the weighted function $(h_m, w'_m)$ to an unweighted Boolean function $h'_m$ with similar $\UC_1$ and $\CC$ complexities.
First, we remove the weights by applying Lemma~\ref{lem:weight-implementation} (using the fact that $w'_m$ is integer-valued). We define $h''_m  =  h_m \circ g_{w'_m}$.
Lemma~\ref{lem:weight-implementation}  implies that
\begin{align*}
	\CC(h''_m) \ge \CC((h_{m},w'_m)) \ge v^m
\end{align*}
and 
\begin{align*}
	\UC_1(h''_m) \le \UC_1((h_{m},w'_m)) \le 2\cdot u^m.
\end{align*}

\paragraph*{Booleanizing.}
To make the inputs of the function $h''_m$ Boolean we repeat 
the argument of \Goos \cite{Goos15}.
If $f$ is a function $f:\Sigma^{N} \to \B$, we may always convert it to a boolean function by composing it with some surjection $g_{\Sigma}: \{0,1\}^{\lceil{\log |\Sigma|\rceil}} \to \Sigma$. The following naive bounds will suffice for our purposes:
\begin{equation}\label{eq:Booleanizing}
	\mathcal{C}(f) \le \mathcal{C}(f \circ g_{\Sigma}) \le \mathcal{C}(f) \cdot \lceil{\log |\Sigma|\rceil}
\quad\text{forall~} \mathcal{C}\in \{\UC_1, \CC\}.
\end{equation}
In our final alphabet $\Sigma = \{0\} \cup [k]^m$, thus $h'_m = h''_m \circ g_{\Sigma}$ is a Boolean function with 
$$\CC(h'_m) \ge \CC(h''_m) \ge v^m$$
and 
$$\UC_1(h'_m) \le \UC_1(h''_m) \cdot \lceil{\log |\Sigma|\rceil}  \le 2 \cdot  u^m \cdot O(m\log k).$$

\paragraph*{Input Length.}
The input length of $h_m$ is  $n^m$. By lemma~\ref{lem:weight-implementation}, the input length of $h''_m$ is at most $n^m \cdot (w_{0,max}^m+1)$.
Thus the input length to $h'_m$ is at most
\[O(\log(|\Sigma|) \cdot (n\cdot w_{0,max})^m) = O( m \cdot \log( k)  \cdot (n\cdot w_{0,max})^m)\;\qedhere\]
\end{proof}

\subsection{Gadgets Based on Projective Planes.}\label{sec:projectiveplane}
We will use a reweighted version of the function constructed by \Goos~\cite{Goos15} based on projective planes as our gadget. Let us first recall the definition of a projective plane.

\begin{definition}[Projective plane]
A projective plane is a $k$-uniform hyper-graph with $n=k^2-k+1$ edges and $n$ nodes with the following properties.
\begin{itemize}
\item Each node is incident on exactly $k$ edges.
\item For every two nodes, there exists a unique edge containing both. 
\item Every two edges intersect on exactly one node.
\end{itemize}
\end{definition}
Given a projective plane, it follows from Hall's theorem that it is possible to assign an ordering to the edges incident to each vertex in a way that for each edge, its assigned order for each of its nodes is different. Namely, for each $i$, there are no two nodes for which their $i$-th incident edge is the same. 

It is well-known that projective planes exist for every $k$ such that $k-1$ is a prime power. \Goos~\cite{Goos15} introduced the following function $f: (\{0\}\cup \Sigma)^n \to \bool$ based on a projective plane, with $\Sigma= [k]$.  We think of the inputs of $f$ as a sequence of pointers, one for each node, where $0$ is the Null pointer, and $i\in [k]$ is a pointer to the $i$-th edge on which the node is incident on. We set $f(x)=1$ if there is an edge of the projective plane such that all its nodes point to it, and $f(x)=0$ otherwise. 

We will be interested in showing a gap between the certificate complexity of ``$f(0^n)=0$'' and $\UC_1(f)$. However, the function as is, allows a certificate of size $k$ for ``$f(0^n)=0$'' matching its $\UC_1(f)$. One certificate for ``$f(0^n)=0$'' is to pick an arbitrary edge of the projective plane, and certify that all its nodes have the Null pointer. This certifies ``$f(0^n)=0$'' as every two edges in a projective plane intersect on a node. An unambiguous collection of size $k$ certificates consists of picking for each edge all its nodes and ensuring that they point to that edge. This collection is unambiguous using the same property that every two edges intersect on one node. 

In order to obtain a gadget with a gap between $\UC_1$ and $\CC$, \Goos introduced weights on the input alphabet of $f$. Each element $i \in \Sigma$ is assigned a weight $w(i)$, where the weights are intended to carry the following meaning: For every $i\in \Sigma$ it costs $w(i)$ for a certificate to assure that ``$x_j=i$'', and moreover $0$ has the special property that it costs $\max_{i\in \Sigma} w(i)$ to assure that ``$x_j=0$'' (as in Definition~\ref{def:weight-scheme}).  In \cite{Goos15} each $i\in [k]$ is assigned a weight $w(i)=i$.  \Goos~\cite{Goos15} implemented this weighting scheme specifically for the case when $w(i):=i$ via a weighting gadget $g_w:(\{0\} \cup \Sigma)^k \to (\{0\} \cup \Sigma)$ (as done in \lem{weight-implementation}) and considering $f\circ g_w$. Our improvement comes from considering a different weighting scheme with fractional weights. 


\begin{claim}[Reweighting the Projective Plane]\label{claim:newweights}
Let $f$ be defined as above, and let $w(i):=\frac{i}{{(k+1)/2}}$. Then, $(f,w)$ has an unambiguous collection of simple $1$-certificates of size $k$ and weight $k$. Moreover, any certificate for $f(0^n)=0$ is of weight at least $\frac{{k^2-k+1}}{(k+1)/2}$
\end{claim}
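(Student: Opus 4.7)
The plan is to show the upper bound on the $\UC_1$ side by exhibiting one simple $1$-certificate per edge of the projective plane, and then to prove the lower bound on certificates for ``$f(0^n)=0$'' via a double-counting argument that uses the Hall-based ordering property together with the defining incidence properties of the projective plane.

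For the upper bound, for each edge $e$ I would take the certificate that fixes $x_v = i_v^e$ at each of the $k$ nodes $v \in e$, where $i_v^e \in [k]$ is the index of $e$ in $v$'s ordering of its incident edges, and leaves the coordinates at the remaining nodes fully free. This is a simple certificate of size $k$, and it witnesses $f = 1$ since all nodes of $e$ point to $e$. By the ordering property, the indices $\{i_v^e : v \in e\}$ form a permutation of $\{1, \ldots, k\}$, so the weight is $\sum_{j=1}^{k} w(j) = \frac{1+2+\cdots+k}{(k+1)/2} = k$. The collection is unambiguous because any two distinct edges $e, e'$ share a unique node $v$, and the two certificates prescribe $i_v^e \neq i_v^{e'}$ at $v$, so no input can satisfy both.

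For the lower bound, let $S_1 \times \cdots \times S_n$ be any certificate for ``$f(0^n)=0$''. Each $S_v$ contains $0$, so by Definition~\ref{def:weight-scheme} its weight is $\max_{i \in \Sigma \setminus S_v} w(i) = \frac{r_v}{(k+1)/2}$, where I set $r_v := \max(\Sigma \setminus S_v)$, and $r_v := 0$ if $S_v = \{0\} \cup \Sigma$. The key claim is that for every edge $e$ there exists $v \in e$ with $i_v^e \leq r_v$: otherwise, $i_v^e \in S_v$ for every $v \in e$, and the input that sets $x_v = i_v^e$ for $v \in e$ and $x_v = 0$ elsewhere would be consistent with the product while giving $f = 1$ (since all of $e$'s nodes point to $e$), contradicting the certificate.

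To finish, I would count pairs $(v,e)$ with $v \in e$ and $i_v^e \leq r_v$ in two ways. Fixing $v$, the map $e \mapsto i_v^e$ is a bijection from the $k$ edges through $v$ to $\{1, \ldots, k\}$, so exactly $r_v$ such pairs involve $v$; summing gives a total of $\sum_v r_v$. Fixing $e$, the covering claim yields at least one such pair per edge, so the count is at least $n = k^2 - k + 1$. Hence $\sum_v r_v \geq k^2-k+1$, and the certificate weight is at least $\frac{k^2-k+1}{(k+1)/2}$. The main step to get right is the reduction from the general set $S_v$ to the scalar $r_v$, which leans on the monotonicity of $w$; once that is in place, the double count is essentially forced by the projective-plane structure.
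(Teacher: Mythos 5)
Your proof is correct, but it takes a more self-contained route than the paper. The paper simply cites \Goos's Claims~6 and~7 (which establish, for the \emph{integer} weights $w'(i)=i$, that $(f,w')$ has an unambiguous collection of simple $1$-certificates of size $k$ and weight $k(k+1)/2$, while any $0$-certificate for $0^n$ has weight at least $k^2-k+1$), and then observes that dividing every weight by $(k+1)/2$ scales both bounds multiplicatively. You instead reprove the underlying geometric facts directly with the rescaled weights: the upper bound via the Hall-ordering permutation argument (so that the certificate weight is $\sum_{j=1}^k w(j)=k$), and the lower bound via the double-count of incidences $(v,e)$ with $i_v^e \le r_v$, using the covering claim that every edge must have at least one such incidence. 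Both arguments are sound, and the covering claim correctly exploits the fact that $i_v^e>r_v$ forces $i_v^e\in S_v$, since $r_v$ is the largest \emph{excluded} symbol. What your version buys is a complete proof from the projective-plane axioms, which is nice since \Goos's proof is not reproduced in this paper; what the paper's version buys is brevity and the explicit observation that the entire improvement is a reweighting of an existing gadget.

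One small point you should make explicit: to conclude that your collection is an \emph{unambiguous collection} of $1$-certificates you also need the covering property — every $x$ with $f(x)=1$ lies in some certificate of the collection. This is immediate (if $f(x)=1$ then some edge $e$ has all its nodes pointing to it, i.e.\ $x_v=i_v^e$ for all $v\in e$, so $x$ is consistent with the certificate you built for $e$), but you only verify pairwise inconsistency and omit coverage.
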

\begin{proof}
 \Goos \cite[Claims~6~and~7]{Goos15} showed  that with respect to the weight-function $w'(i) = i$, the function $f$ has an unambiguous collection of simple $1$-certificates of size-$k$ and weight $(k\cdot (k+1))/2$. However, any certificate for ``$f(0^n)=0$'' is of weight at least ${k^2 - k+1}$.

From this, it is immediate that with respect to $w\equiv \frac{w'}{(k+1)/2}$, $f$ has an unambiguous collection of simple $1$-certificates of size-$k$ and (fractional) weight $\frac{(k\cdot (k+1))/2}{(k+1)/2} = k $. However, any certificate for $0^n$ is of weight at least $\frac{k^2 - k+1}{(k+1)/2}$. 
\end{proof}

\subsection{Putting Things Together}
Given a gadget $(h,w_0)$ such that $h$ has unambiguous collection of simple $1$-certificates of size-$k$ and (fractional) weight $u$, however any certificate for $0^n$ is of (fractional) weight $v$, with $v>u>1$ and $u \ge k$, Theorem~\ref{thm:gadget to separation} gives a polynomial separation between $\CC$ and $\UC_1$:
\begin{equation}
\CC(h'_m) \ge v^m = (u^m)^{\log(v)/\log(u)} \ge \widetilde{\Omega}\(\UC_1(h'_m)^{\log(v)/\log(u)}\)\;.\label{eq:UCvsC}
\end{equation}
We take $h$ to be the projective plane function $f$ described in Section~\ref{sec:projectiveplane} with $k=8$, $n = k^2 - k + 1=57$ and weight function $w_0(i) = \frac{i}{(k+1)/2}$.  By Claim~\ref{claim:newweights}, we have that with respect to $w_0$, $h$ has an unambiguous collection of simple $1$-certificates of size-$k$ and weight $k = 8$. However, any certificate for $0^n$ is of  weight $\frac{k^2 - k+1}{(k+1)/2} = 38/3$.
Plugging these values in Equation~\eqref{eq:UCvsC} we get a better separation:
\begin{equation}
	\CC(h'_m) \ge \widetilde{\Omega}
	\(\UC_1(h'_m)^{\frac{\log(38/3)}{\log(8)}}\)
	\ge \Omega(\UC_1(h'_m)^{1.22})\;,
\end{equation}
where the input length is $N \le \poly(m) \cdot \exp(O(m))$.
The lifting theorem  of \cite{GLMWZ15,Goos15} incurs a loss factor of $\log(N) = O(m)$ in the separation, however this is negligible compared to the $\poly(m) \cdot u^m$ versus $v^m$ separation.

\subsection{Further Improvements}
Since our theorem is general in transforming a fractional weighted gadget into a polynomial separation, it is enough to only improve the gadget construction in order to improve the $\UC_1$ vs $\CC$ exponent.
Indeed, even using the same gadget (the projective plane function of \Goos) we can consider different weight function.
Using computer search it seems that such reweighting is indeed better than our choice of $w_0$. However, the improvement is mild and currently we do not have a humanly verifiable proof for the lower bound on the certificate complexity of $0^n$ under the reweighting. Indeed, \Goos relied on the fact that the weights were $w'(i)=i$ in order to present a simple proof of his lower bound on the certificate complexity of ``$h(0^n) = 0$'' according to $w'$.
It seems though (we have verified this using computer-search for small values of $k$) that the best weights are attained by taking $w'(i) = i+1$ and then reweighting by multiplying all weights by the constant $\alpha = \frac{1}{(k+3)/2}$, so that the unambiguous certificates for $h$ will be of weights $k$. We leave proving a lower bound under this weight function as an open problem.

\section{Attempting a Super-Quadratic Separation vs.\ Block Sensitivity}\label{sec:failed}

In this section, we describe why attempting to use
\thm{main} to get a super-quadratic separation between $\bs(f)$
and $\s(f)$ fails. In the process, we show some new lower bounds
for $\UCmin(f)$ and even for the one-sided non-negative
degree measures.

One approach for the desired super-quadratic separation is to
find a family of functions for which $\bs(f)\gg \UCmin(f)$.
In fact, by \cite{KT16}, it suffices to provide a family of functions for which $\RC(f)\gg \UCmin(f)$ (as explained in \sec{RCsep}). In \sec{RCrelations}, we show that even separating
$\RC(f)$ from $\UCmin(f)$ is impossible: we have
$\RC(f)\leq 2\UCmin(f) -1$. This means our techniques do not give
anything new for this problem. This is perhaps surprising,
since $\RC(f)$ is similar to $\C(f)$, yet
\cite{Goos15} showed a separation between $\C(f)$ and $\UCmin(f)$.

\subsection{A Separation Against \texorpdfstring{$\RC(f)$}{Lg}
is Sufficient}\label{sec:RCsep}

\cite{KT16} showed that a separation between $\s(f)$ and $\RC(f)$
implies an equal separation between $\s(f)$ and $\bs(f)$ (see
 \thm{bsRC}).
The key insight is that $\bs(f)$ becomes $\RC(f)$ when the function
is composed enough times; this was observed by \cite{Tal13}
and by \cite{GSS16}. This means that if we start
with a function separating $\s(f)$ and $\RC(f)$ and compose
it enough times, we should get a function with the same
separation between $\s(f)$ and $\RC(f)$, but with the additional
property that $\bs(f) \approx \RC(f)$.

\subsection{But \texorpdfstring{$\RC(f)$}{Lg} Lower Bounds
\texorpdfstring{$\UCmin(f)$}{Lg}}\label{sec:RCrelations}

We would get a super-quadratic separation between $\bs(f)$
and $\s(f)$ if we had a super-linear separation between $\RC(f)$
and $\UCmin(f)$. Unfortunately, this is impossible using our paradigm, as we now show.
Actually, we can prove an even stronger statement, namely that
$\RC(f)\leq (2\aavdegpmin{,\epsilon}(f)-1)/(1-4\epsilon)$.
We note that this implies \thm{RCUC},
because when $\epsilon=0$, we have
\[\RC(f)\;\leq\; 2\avdegpmin(f)-1\;\leq\; 2\degpmin(f)-1\;\leq\; 2\UCmin(f)-1.\]
This stronger statement says that one-sided conical junta degree is lower bounded
by two-sided randomized certificate complexity, which helps clarify the hierarchy
of lower bounds for randomized algorithms.

The proof of the relationship
$\RC(f)\leq (2\aavdegpmin{,\epsilon}(f)-1)/(1-4\epsilon)$
is somewhat technical; we leave it for \app{technical}, and provide
a cleaner proof (of $\RC(f)\leq 2\UCmin(f)-1$) below.
One interesting thing to note about it
is that it holds for partial functions as well, as long as
the definition of $\aavdegpmin{,\epsilon}(f)$ requires
the approximating polynomial to evaluate to at most $1$
on the entire Boolean hypercube.

Before providing the proof, we'll provide a warm up proof that
$\bs(f)\leq 2\UCmin(f)$.

\begin{lemma}\label{lem:bsUC}
For all non-constant $f:\B^n\to\B$,
we have $\bs(f)\leq 2\UCmin(f)-1$.
\end{lemma}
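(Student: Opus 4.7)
My plan is to reduce the statement to bounding $\bs_x(f)$ when $f(x)=0$ using an unambiguous collection of $1$-certificates. Without loss of generality, let $u := \UCmin(f) = \UC_1(f)$, and let $U$ be an unambiguous collection of $1$-certificates with every $p \in U$ satisfying $|p| \le u$. Pick $x$ attaining $\bs(f) = m$ with disjoint sensitive blocks $B_1, \dots, B_m$, and split on $f(x)$. If $f(x) = 1$, then $x$ is consistent with a unique $p \in U$ of size at most $u$, and every $B_i$ must meet $\supp(p)$ (else $x^{B_i}$ would also be consistent with $p$, contradicting $f(x^{B_i}) = 0$); since the blocks are disjoint, this gives $m \le u \le 2u-1$ immediately. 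So the real work is in the case $f(x) = 0$.

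In that case, for each $i$ let $p_i \in U$ be the unique certificate consistent with $x^{B_i}$, and partition its support as $S_i := \supp(p_i)\cap B_i$ (where $p_i$ disagrees with $x$) and $T_i := \supp(p_i)\setminus B_i$ (where $p_i$ agrees with $x$). Because $f(x)=0$, $p_i$ is not consistent with $x$, so $S_i \neq \emptyset$ and therefore $|T_i| \le u - 1$. The key geometric observation I would use is: if $i \neq j$, then $p_i$ and $p_j$ are inconsistent, and outside $B_i \cup B_j$ both certificates (where defined) agree with $x$; hence any witnessing coordinate of inconsistency must lie in $(S_i \cap T_j) \cup (S_j \cap T_i)$.

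To finish, set $A_{ij} := \mathbb{1}[S_i \cap T_j \neq \emptyset]$; the previous paragraph gives $A_{ij} + A_{ji} \ge 1$ for every unordered pair $\{i,j\}$. On the other hand, for any fixed $j$, the $S_i$'s lie in the pairwise disjoint blocks $B_i$, so distinct indices $i$ with $A_{ij}=1$ contribute distinct elements to $T_j$; thus $\sum_{i \neq j} A_{ij} \le |T_j| \le u - 1$. Double counting yields
\[
\binom{m}{2} \;\le\; \sum_{i \neq j} A_{ij} \;=\; \sum_{j} \sum_{i \neq j} A_{ij} \;\le\; m(u-1),
\]
which simplifies to $m \le 2u - 1$, completing the proof. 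The main conceptual obstacle is recognizing the $S_i/T_i$ decomposition and observing that disagreements between $p_i$ and $p_j$ are localized to $B_i \cup B_j$; once that is in hand, the counting collapses immediately to the desired $2u-1$ bound rather than the naive quadratic bound $u^2$ coming from $\C(f) \le \UCmin(f)^2$.
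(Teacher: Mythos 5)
Your proof is correct and is essentially the paper's argument: your $S_i/T_i$ split is exactly the one-bit/zero-bit decomposition of the $p_i$ that the paper works out at $x=0^n$, and your global double count over the indicators $A_{ij}$ is the paper's tournament/averaging step (the paper finds a vertex of out-degree at least $(m-1)/2$ and bounds $|T_i|$ from below; you bound each $\sum_{i\neq j}A_{ij}\le |T_j|$ from above and sum — same count). The one step left implicit is that $p_i\neq p_j$ for $i\neq j$, which is needed before invoking unambiguity to get inconsistency; it is immediate because a common $p:=p_i=p_j$ would have to agree with $x$ at every coordinate of $S_j\subseteq B_j$ (since $p$ is consistent with $x^{B_i}$ and $B_j$ is disjoint from $B_i$) while also disagreeing with $x$ there (since $p$ is consistent with $x^{B_j}$), and $S_j\neq\emptyset$.
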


\begin{proof}
Without loss of generality,
we have $\UCmin(f)=\UC_{1}(f)$. We also have
$\bs_1(f)\leq\C_1(f)\leq\UC_{1}(f)$, so it remains to show that
$\bs_0(f)\leq 2\UC_{1}(f)-1$. Also without loss of generality,
we assume that the block sensitivity of $0^n$ is
$\bs(f)$ and that $f(0^n)=0$.

Let $B_1,B_2,\ldots,B_{\bs(f)}$
be disjoint sensitive blocks of $0^n$. Let $U$ be an unambiguous
collection of $1$-certificates for $f$, each of size at most
$\UC_{1}(f)$. For each $i\in[\bs(f)]$, we have $f(\vec{0}^{B_i})=1$,
so there is some $1$-certificate $p_i\in U$ such that $p_i$ is
consistent with $\vec{0}^{B_i}$. Since $p_i$ is a $1$-certificate,
it is not consistent with $\vec{0}$, so it has a $1$ bit
(which must have index in $B_i$). Now, if $i\neq j$, the certificate
$p_i$ has a $1$ inside $B_i$ and only $0$ or $*$ symbols outside
$B_i$, and the certificate $B_j$ has a $1$ inside $B_j$
and only $0$ or $*$ symbols outside $B_j$; thus $p_i$ and $p_j$
are different. Since $U$ is an unambiguous collection, $p_i$
and $p_j$ must conflict on some bit (with one of them assigning $0$
and the other assigning $1$),
or else there would be an input consistent with both.

We construct a directed graph on vertex set $[\bs(f)]$ as follows.
For each $i,j\in[\bs(f)]$ with $i\neq j$,
we draw an arc from $i$ to $j$
if $p_i$ has a $0$ bit in a location where $p_j$ has a $1$ bit.
It follows that for each pair $i,j\in[\bs(f)]$ with $i\neq j$,
we either have an arc from $i$ to $j$
or else we have an arc from $j$ to $i$ (or both).
The number of arcs in this graph is at least
$\bs(f)(\bs(f)-1)/2$, so the average out degree is at least
$(\bs(f)-1)/2$. Hence there is some vertex $i$ with out degree
at least $(\bs(f)-1)/2$. But this means $p_i$ conflicts with
$(\bs(f)-1)/2$ other certificates
$p_{j_1},p_{j_2},\ldots,p_{j_{(\bs(f)-1)/2}}$ with $p_i$
having a bit $0$ and $p_{j_k}$ having a $1$-bit; however,
two different certificates $p_{j_x}$ and $p_{j_y}$ cannot
both agree on a $1$ bit, since the $1$ bits of $p_{j_x}$
must come from block $B_{j_x}$ and the blocks are disjoint.
This means $p_i$ has at least $(\bs(f)-1)/2$ zero bits.
It must also have at least one $1$ bit. Thus
$|p_i|\geq \bs(f)/2+1/2$, so $\bs(f)\leq 2\UCmin(f)-1$.
\end{proof}

We now generalize this lemma from $\bs$ to $\RC$, proving \thm{RCUC}. A further
strengthening of the result can be found in \app{technical}.

\RCUC*

\begin{proof}
Without loss of generality,
we have $\UCmin(f)=\UC_{1}(f)$. We also have
$\RC_1(f)\leq\C_1(f)\leq\UC_{1}(f)$, so it remains to show that
$\RC_0(f)\leq 2\UC_{1}(f)-1$. Also without loss of generality,
we assume that the randomized certificate of $0^n$ is
$\RC(f)$ and that $f(0^n)=0$.

We prove the theorem using the characterization of $\RC(f)$ as the fractional block sensitivity of $f$.
Let $B_1,B_2,\ldots,B_{m}$ be minimal sensitive blocks of $0^n$.
Let $a_1, \ldots, a_m$ be weights assigned to blocks $B_1, \ldots, B_m$ such that 
\[
 \sum_{j}{a_j} = \RC(f)\quad \text{, and}\quad \forall{i \in [n]} :\sum_{j: i \in B_j} {a_j} \le 1\;.
\]
Let $U$ be an unambiguous collection of $1$-certificates for $f$, each of size at most $\UC_{1}(f)$. For each $i\in[m]$, we have $f(\vec{0}^{B_i})=1$,
so there is some $1$-certificate $p_i\in U$ such that $p_i$ is
consistent with $\vec{0}^{B_i}$. Since $p_i$ is a $1$-certificate,
it is not consistent with $\vec{0}$, so it has a $1$ bit
(which must have index in $B_i$). 
Next, we show that if $i\neq j$, then $p_i$ and $p_j$ are different.
Assume by contradiction that $p_i = p_j$, then $p_i$ is a partial assignment that satisfy both $\vec{0}^{B_i}$ and $\vec{0}^{B_j}$, hence it must satisfy $\vec{0}^{B_i\cap B_j}$, but this means that $f(\vec{0}^{B_i \cap B_j}) = 1$ which contradicts the fact that both $B_i$ and $B_j$ are minimal sensitive blocks for $\vec{0}$.

We established that for any $i\neq j$, 
the partial assignments $p_i$ and $p_j$ are different.
Since $U$ is an unambiguous collection, $p_i$
and $p_j$ must conflict on some bit (with one of them assigning $0$
and the other assigning $1$),
or else there would be an input consistent with both.

We construct a directed weighted graph on vertex set $[m]$ as follows.
For each $i,j\in[m]$ with $i\neq j$,
we draw an arc from $i$ to $j$ with weight $a_{i} \cdot a_{j}$,
if $p_i$ has a $0$ bit in a location where $p_j$ has a $1$ bit.
It follows that for each pair $i,j\in[m]$ with $i\neq j$,
we either have an arc from $i$ to $j$
or else we have an arc from $j$ to $i$ (or both).
The total weight of the arcs in this graph is
\begin{align*}
\sum_{i<j}{a_{i} \cdot a_{j} \cdot (|p_i^{-1}(1) \cap p_j^{-1}(0)| + |p_i^{-1}(0) \cap p_j^{-1}(1)|)} &\ge\;
\sum_{i<j}{a_{i} \cdot a_{j}}\\
&=
\frac{1}{2} \cdot (\sum_{i}{a_{i}})^2 - \frac{1}{2} \cdot \sum_{i}{a_{i}^2}  \\
&\ge\; \frac{1}{2} \cdot (\sum_{i}{a_{i}})^2 - \frac{1}{2} \cdot \sum_{i}{a_{i}} \tag{$a_i \le 1$}\\
&\ge\; \frac{1}{2} \cdot (\RC(f)^2 - \RC(f))
\end{align*}
Note that by symmetry, the LHS equals
$$\sum_{i} a_{i} \cdot \sum_{j\neq i} {a_j \cdot |p_i^{-1}(0) \cap p_j^{-1}(1)|}.$$
Since $\sum_{i}{a_i} = \RC(f)$, by averaging,
\begin{equation}\label{eq:exists i}
\exists{i}:\;
 \frac{1}{2} (\RC(f)-1) \le \sum_{j\neq i} {a_j \cdot |p_i^{-1}(0) \cap p_j^{-1}(1)|}.
\end{equation}  
Next, we get a lower bound on $|p_i^{-1}(0)|$ from Eq.~\eqref{eq:exists i}.
\begin{align*}
	 \frac{1}{2} (\RC(f)-1) &\le \sum_{j\neq i} {a_j \cdot |p_i^{-1}(0) \cap p_j^{-1}(1)|}\\
	 &= \sum_{k: p_i(k) = 0}\;\;  \sum_{j:p_j(k) = 1} {a_j}\\
	 &\le \sum_{k: p_i(k) = 0} \;\; \sum_{j:k \in B_j} {a_j}\tag{$p_j$ is consistent with $\vec{0}^{B_j}$}\\
	 &\le |p_i^{-1}(0)|. \tag{$\sum_{j:k \in B_j} {a_j}\le 1$ for all $k$}
\end{align*}
We showed that $p_i$ has at least $(\RC(f)-1)/2$ zero bits.
It must also have at least one $1$ bit. Thus
$|p_i|\geq \RC(f)/2+1/2$, so $\RC(f)\leq 2\UCmin(f)-1$.
\end{proof}

We note that the relationships in \lem{bsUC} and \thm{RCUC} are tight.\footnote{We thank Mika G\"o\"os
	for helping to simplify this construction.}
Let $k$ be any non-negative integer, we construct a function $f$ on $n = 2k+1$ variables with $\s(f) = \bs(f) = \RC(f) = n$ and $\UCmin(f) \le k+1$. 
This shows that the inequalities $\bs(f) \le 2\UCmin(f) -1 $ and $\RC(f) \le 2\UCmin(f)-1$ are both tight for all values of $\UCmin(f)$.
We define the function $f$ by describing a set of partial assignments $p_0, \ldots, p_{n-1}$  such that $f(x)=1$ if and only if $\exists{i}: p_i \subseteq x$.
 Let  $p = 0^k 1 {*}^k$. The assignments $p_0, \ldots, p_{n-1}$ are all possible cyclic-shifts of $p$, namely for $0 \le i \le k$,
 $p_i = 0^{k-i} 1 {*}^{k} 0^i$ and for $k+1\le i\le 2k$ we have
 $p_i = {*}^{2k + 1 - i} 0^{k} 1 {*}^{i-1 -k}$. It is easy to verify that any two different partial assignments $p_i$ and $p_j$ are not consistent with one another. Hence, $p_0, \ldots, p_{n-1}$ is an unambiguous collection of $1$-certificates for $f$, each of size $k+1$, exhibiting that $\UCmin(f) \le k+1$.
On the other hand, $f(0) = 0$ and for all $i\in [n]$, we have $f(e_i) = 1$, showing that $f$ has sensitivity $n$ on the all-zeros input. Overall, we showed that $\s(f) = \bs(f) = \RC(f) = n = 2k+1$ while $\UCmin(f) \le k$.

\section*{Acknowledgements}

We would like to thank Mika G\"o\"os and Robin Kothari for many helpful discussions and for comments on a preliminary draft.
We also thank the anonymous referees of ITCS for their comments.

\appendix

\section{Lower Bound for Approximate Non-Negative Degree}
\label{app:technical}

Here we show that the lower bound in \thm{RCUC} holds even for one-sided
average approximate non-negative degree, the smallest version of conical
junta degree. This is saying that conical juntas, in all their forms, give a more
powerful lower bound technique for randomized algorithms than $\RC(f)$.

\begin{theorem}\label{thm:avdeg}
	Let $f:\B^n\to\B$ be a non-constant function,
	and let $\aavdegpmin{,\epsilon}(f)$ denote the
	minimum average degree of a non-negative polynomial that approximates
	either $f$ or its negation with error at most $\epsilon$
	(see \sec{degs} for definitions).
	If $\epsilon<1/4$, we have
	$\RC(f)\leq \frac{2\aavdegpmin{,\epsilon}(f)-1}{1-4\epsilon}$.
\end{theorem}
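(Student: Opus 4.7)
The plan is to emulate the proof of \thm{RCUC}, replacing the unambiguous 1-certificate family by a non-negative polynomial $q = \sum_p w_p \, p$ of minimum average degree $D := \aavdegpmin{,\epsilon}(f)$ that $\epsilon$-approximates $f$ (WLOG, by possibly negating $f$, $q$ approximates $f$ rather than $\neg f$). Further assume $f(0^n)=0$ and that $\RC_0(f)=\RC(f)$ is witnessed at $0^n$ by nonnegative weights $\{a_B\}$ on the minimal sensitive blocks of $0^n$ with $\sum_B a_B=\RC(f)$ and $\sum_{B\ni k} a_B\le 1$ for every $k$. For each minimal sensitive block $B$, define $W_B := \{p : p\subseteq 0^B,\; p\not\subseteq 0^n,\; w_p>0\}$; the inequalities $q(0^B)\ge 1-\epsilon$ and $q(0^n)\le\epsilon$ imply that the total weight $\|\mu_B\| := \sum_{W_B} w_p$ lies in $[1-2\epsilon,1]$, and we let $\nu_B(p):=w_p/\|\mu_B\|$ be the induced distribution on $W_B$. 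The average-degree bound at $x=0^B$ gives $\E_{P\sim\nu_B}[|P|]\le D/(1-2\epsilon)$.

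Mimicking the combinatorial core of \thm{RCUC}: since $P_j^{-1}(1)\subseteq B_j$ almost surely, for any fixed $P_i$ we have $\sum_{j\ne i}a_j\,\E_{\nu_j}[|P_i^{-1}(0)\cap P_j^{-1}(1)|]\le |P_i^{-1}(0)|$; taking expectation over $P_i$, multiplying by $a_i$, summing, and symmetrizing in $(i,j)$ yields
\[
\sum_{i\ne j}a_i a_j\,\E\!\left[|P_i^{-1}(0)\cap P_j^{-1}(1)|+|P_i^{-1}(1)\cap P_j^{-1}(0)|\right] \;\le\; 2\sum_i a_i\,\E_{\nu_i}[|P_i^{-1}(0)|].
\]
Since the expected number of conflicts dominates the probability of any conflict occurring, to conclude we need a matching pairwise consistency bound
\[
\Pr_{P_i\sim\nu_i,\,P_j\sim\nu_j}[\text{$P_i$ and $P_j$ are consistent}] \;\le\; \frac{2\epsilon}{1-2\epsilon} \qquad (i\ne j).
\]

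The main obstacle is establishing this consistency bound. The key lever is minimality: because $B_i\cap B_j$ is a proper subset of two distinct minimal sensitive blocks, it is not sensitive for $0^n$, so $q(0^{B_i\cap B_j})\le\epsilon$. A case analysis on a consistent pair $(p,p')\in W_i\times W_j$ shows that either both $p,p'\subseteq 0^{B_i\cap B_j}$ (contributing joint mass $O(\epsilon^2)$), or the union $T:=p^{-1}(1)\cup p'^{-1}(1)$ satisfies $T\subsetneq B_i$ (or $T\subsetneq B_j$), in which case minimality gives $q(0^T)\le\epsilon$ and hence $w_p+w_{p'}\le\epsilon$; the boundary cases where $T=B_i$ or $T=B_j$ reduce back to the first case via the consistency-forced constraint $p'^{-1}(1)\subseteq B_i\cap B_j$ (resp.\ $p^{-1}(1)\subseteq B_i\cap B_j$), while the residual doubly-heavy subcase—where $p^{-1}(1)=B_i$ and $p'^{-1}(1)=B_j$—is controlled by combining $q(0^{B_i\cup B_j})\le 1$ with the average-degree constraint $|p|\ge |B_i|$, which bounds the total mass of such heavy $p$'s. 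Combining the consistency bound with the displayed inequality, and using $|P_i|\ge|P_i^{-1}(0)|+1$ together with $\sum_i a_i\,\E_{\nu_i}[|P_i|]\le D\RC(f)/(1-2\epsilon)$, straightforward algebra with $\alpha=\tfrac{2\epsilon}{1-2\epsilon}$ (so that $(1-2\epsilon)(1-\alpha)=1-4\epsilon$) yields $\RC(f)\le (2D-1)/(1-4\epsilon)$. The single hardest point is the case analysis above: the boundary and doubly-heavy configurations demand delicate accounting to produce exactly the rate $\tfrac{2\epsilon}{1-2\epsilon}$, rather than something weaker that would cost extra factors in the final exponent.
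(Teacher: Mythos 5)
Your setup---work with a minimum-average-degree conical junta $q=\sum_p w_p\,p$, restrict to the $0$-input with the weighted minimal sensitive blocks $\{a_B\}$, and double-count over ordered pairs of blocks using $W_B=\{p:p\subseteq 0^B,\,p\nsubseteq 0^n,\,w_p>0\}$---agrees with the paper's, and the ``marginal'' bookkeeping (the displayed inequality, $|P_i|\ge|P_i^{-1}(0)|+1$, and $\sum_i a_i\,\E_{\nu_i}[|P_i|]\le D\,\RC(f)/(1-2\epsilon)$) is essentially what appears at the end of the paper's proof. But the argument then pivots through a pairwise consistency bound
$\Pr_{P_i\sim\nu_i,\,P_j\sim\nu_j}[\text{$P_i$ and $P_j$ consistent}]\le\frac{2\epsilon}{1-2\epsilon}$,
which is simply false. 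At $\epsilon=0$ it would force the supports of $\nu_i$ and $\nu_j$ to be mutually inconsistent. Take $f=\OR_2$ with the optimal exact conical junta
$q = \tfrac12\,x_1 + \tfrac12\,x_2 + \tfrac12\,x_1(1-x_2) + \tfrac12\,(1-x_1)x_2$, which attains $\aavdegpmin{,0}(\OR_2)=3/2$. The minimal sensitive blocks of $00$ are $B_1=\{1\}$, $B_2=\{2\}$; one gets $W_1=\{1{*},\,10\}$, $W_2=\{{*}1,\,01\}$, both uniform, and the pair $(1{*},{*}1)$ is consistent, so the consistency probability is $1/4$. This is precisely the ``doubly-heavy'' case you flag ($p^{-1}(1)=B_i$, $p'^{-1}(1)=B_j$, $T=B_i\cup B_j$), and the lever you propose for it---$q(0^{B_i\cup B_j})\le 1$ together with $|p|\ge|B_i|$---is vacuous when the minimal blocks are small, as here. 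Even granting the true value $\alpha=1/4$, your chain of inequalities only yields $\RC(\OR_2)\le 7/3$, short of the tight $\RC=2D-1=2$: the pairwise formulation is intrinsically lossy, not merely missing a proof of the claimed rate.

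The paper sidesteps all of this by never invoking pairwise consistency. Its key lemma is the \emph{marginal} inequality, for distinct minimal sensitive blocks $B_1,B_2$ of $y=0^n$,
\[
\sum_{p\subseteq y^{B_1}:\,p\nsubseteq y,\,p\nsubseteq y^{B_1\cup B_2}} w_p
\;+\;\sum_{p\subseteq y^{B_2}:\,p\nsubseteq y,\,p\nsubseteq y^{B_1\cup B_2}} w_p \;\ge\; 1-4\epsilon,
\]
where the event ``$p\nsubseteq y^{B_1\cup B_2}$'' (equivalently, $\supp(p)$ meets $B_2\setminus B_1$) depends on $p$ alone, not on a random partner $p'$. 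It is proved by starting from $q(y^{B_1})+q(y^{B_2})\ge 2-2\epsilon$, peeling off the mass of $p$'s consistent with $y^{B_1\cup B_2}$ via $q(y^{B_1\cup B_2})\le 1$ and $q(y^{B_1\cap B_2})\le\epsilon$ (the latter from minimality), and then discarding the $p\subseteq y$ contribution. Fed into the weighted double sum and averaged over $B_1$, this charges the remaining cost coordinate-by-coordinate through the packing constraint $\sum_{B\ni k}a_B\le 1$, with no loss. In the $\OR_2$ example the marginal sum equals $\tfrac12+\tfrac12=1=1-4\epsilon$, while the complementary inconsistency probability is only $3/4$. To salvage a conical-junta proof along your lines, replace the pairwise consistency event by this single-certificate event $p\nsubseteq y^{B_1\cup B_2}$.
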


\begin{proof}
	Let $q$ be the non-negative approximating polynomial with
	average degree $\aavdegpmin{,\epsilon}(f)$.
	Without loss of generality,
	we assume $q$ approximates $f$ rather than its negation.
	We can write $q\equiv \sum_{p\in\{0,1,*\}} w_p p$,
	so for any $x\in\B^n$, we have
	\[q(x)=\sum_{p\in\{0,1,*\}}w_p p(x)=\sum_{p:\,p\subseteq x} w_p,\]
	where recall that $w_p$ are non-negative weights given to
	partial assignments.
	This means for all $x\in\B^n$, we know that
	\[\left|f(x)-\sum_{p:\,p\subseteq x} w_p\right|\leq \epsilon,
	\qquad \sum_{p:\,p\subseteq x} w_p\leq 1,\qquad
	\mbox{and}\qquad \sum_{p:\,p\subseteq x} w_p|p|
	\leq \aavdegpmin{,\epsilon}(f).\]
	
	Now, consider the input $y\in\B^n$ for which $\RC_y(f)=\RC(f)$.
	There are two cases: either $y$ is a $0$-input, or else
	$y$ is a $1$-input. If $y$ is a $1$-input, we use the fractional
	certificate complexity interpretation of $\RC_y(f)$: the value
	$\RC_y(f)$ is the minimum amount of weight that can be distributed
	to the bits of $y$ such that every sensitive block of $y$
	contains bits of total weight at least $1$. We assign to bit
	$i\in[n]$ the weight
	\[\frac{1}{1-2\epsilon}\sum_{p:\,p\subseteq y, p_i\neq *} w_p.\]
	Then each sensitive block $B\subseteq [n]$ for $y$ satisfies
	$f(y^B)=0$, so the sum of $w_p$ over all $p\subseteq y$ that
	have support disjoint from $B$ must be at most $\epsilon$.
	Since the sum of $w_p$ over all $p\subseteq y$ is at least
	$1-\epsilon$, there must be weight at least $1-2\epsilon$
	assigned to partial assignments consistent with $p$ whose
	support overlaps $B$. It follows that the total weight given
	to the bits in $B$ is at least $1$, which means this weighting
	is feasible. This means the total weight upper bounds $\RC_y(f)$, so
	\[\RC(f)=\RC_y(f)\leq\frac{1}{1-2\epsilon}
	\sum_{i\in[n]\;}\sum_{p:\,p\subseteq y,\,p_i\neq *} w_p
	=\frac{1}{1-2\epsilon}\sum_{p:\,p\subseteq y} w_p|p|
	\leq \frac{\aavdegpmin{,\epsilon}(f)}{1-2\epsilon}.\]
	
	It remains to deal with the case where $y$ is a $0$-input.
	In this case, we use the fractional block sensitivity interpretation
	of $\RC_y(f)$: the value of $\RC_y(f)$ is the maximum amount of
	weight that can be distributed to the sensitive blocks of $y$ such
	that every bit of $y$ lies inside blocks of total weight
	at most $1$. Without loss of generality,
	we can assume only minimal sensitive blocks are assigned weight
	(minimal sensitive blocks are sensitive blocks such that
	all their proper subsets are not minimal).
	
	Let $\mathcal{B}:=\{B\subseteq[n]:f(y^B)\neq f(y)\}$ be the
	set of sensitive blocks of $y$, and let
	$\mathcal{M}:=\{B\in\mathcal{B}: \forall B'\subset B, B'\notin \mathcal{B}\}$
	be the set of minimal sensitive blocks of $y$.
	Let  $\{a_B\}_{B\in\mathcal{M}}$ with $a_B\in\mathbb{R}^+$
	be the optimal weighting of the minimal sensitive blocks.
	This means
	$\sum_{B\in\mathcal{M}} a_B=\RC_y(f)$ and
	$\sum_{B\ni i} a_B\leq 1$ for all $i\in [n]$.
	
	We have $\sum_{p\subseteq y} w_p\leq \epsilon$ and
	$\sum_{p\subseteq y^B} w_p\geq 1-\epsilon$
	for all $B\in\mathcal{B}$. 
	Thus, for any $B_1,B_2\in\mathcal{M}$ with $B_1\neq B_2$, we can write
	\[2-2\epsilon
	\leq \sum_{p\subseteq y^{B_1}} w_p + \sum_{p\subseteq y^{B_2}} w_p
	= \sum_{p\subseteq y^{B_1}:\,p\nsubseteq y^{B_1\cup B_2}} w_p
	+ \sum_{p\subseteq y^{B_2}:\,p\nsubseteq y^{B_1\cup B_2}} w_p
	+ \sum_{p\in G} w_p + \sum_{p\in H} w_p,\]
	where $G:=\{p:p\subseteq y^{B_1},\,p\subseteq y^{B_1\cup B_2}\}$
	and
	$H:=\{p:p\subseteq y^{B_2},\,p\subseteq y^{B_1\cup B_2}\}$.
	The last two sums are equal to
	$\sum_{p\in G\cup H} w_p + \sum_{p\in G\cap H} w_p$.
	We have
	$\sum_{p\in G\cup H} w_p\leq \sum_{p\subseteq y^{B_1\cup B_2}} 
	w_p\leq 1$. Also, any $p\in G\cap H$ satisfies
	$p\subseteq y^{B_1\cap B_2}$. Since $B_1\neq B_2$ and they are
	both minimal sensitive blocks, we have $f(y^{B_1\cap B_2})=0$,
	so $\sum_{G\cap H} w_p\leq \sum_{p\subseteq y^{B_1\cap B_2}} w_p
	\leq \epsilon$. It follows that
	\[\sum_{p\subseteq y^{B_1}:\,p\nsubseteq y^{B_1\cup B_2}} w_p
	+ \sum_{p\subseteq y^{B_2}:\,p\nsubseteq y^{B_1\cup B_2}} w_p
	\geq 1-3\epsilon.\]
	Note that the above sums are over disjoint sets, since
	if $p\subseteq y^{B_1}$ and $p\nsubseteq y^{B_1\cup B_2}$,
	then $p$ must disagree with $y^{B_2}$ on some bit inside $B_2$.
	If we split out the parts of the sums for which $p\subseteq y$,
	we get
	\[\sum_{p\subseteq y} w_p+
	\sum_{p\subseteq y^{B_1}:\,p\nsubseteq y,
		\,p\nsubseteq y^{B_1\cup B_2}} w_p
	+ \sum_{p\subseteq y^{B_2}:\,p\nsubseteq y,
		\,p\nsubseteq y^{B_1\cup B_2}} w_p
	\geq 1-3\epsilon.\]
	Since $f(y)=0$, the first sum is at most $\epsilon$, so
	\[\sum_{p\subseteq y^{B_1}:\,p\nsubseteq y,
		\,p\nsubseteq y^{B_1\cup B_2}} w_p
	+ \sum_{p\subseteq y^{B_2}:\,p\nsubseteq y,
		\,p\nsubseteq y^{B_1\cup B_2}} w_p
	\geq 1-4\epsilon.\]
	
	
	We now write the following.
	\begin{align*}
	\RC(f)^2-\RC(f)
	&=\sum_{B_1\in\mathcal{M}}a_{B_1}\sum_{B_2\in\mathcal{M}}a_{B_2}
	-\sum_{B_1\in\mathcal{M}}a_{B_1}\\
	&\leq \sum_{B_1\in\mathcal{M}}a_{B_1}\sum_{B_2\in\mathcal{M}}a_{B_2}
	-\sum_{B_1\in\mathcal{M}}a_{B_1}^2\\
	&=\sum_{B_1\in\mathcal{M}}a_{B_1}\sum_{B_2\neq B_1}a_{B_2}\\
	&\leq\frac{1}{1-4\epsilon}
	\sum_{B_1\in\mathcal{M}}a_{B_1}\sum_{B_2\neq B_1}a_{B_2}
	\left(\sum_{p\subseteq y^{B_1}:\,p\nsubseteq y,
		\,p\nsubseteq y^{B_1\cup B_2}} w_p
	+ \sum_{p\subseteq y^{B_2}:\,p\nsubseteq y,
		\,p\nsubseteq y^{B_1\cup B_2}} w_p \right)\\
	&=\frac{2}{1-4\epsilon}
	\sum_{B_1\in\mathcal{M}}a_{B_1}\sum_{B_2\neq B_1}a_{B_2}
	\sum_{p\subseteq y^{B_1}:\,p\nsubseteq y,
		\,p\nsubseteq y^{B_1\cup B_2}} w_p,
	\end{align*}
	where the second line follows because $a_{B_1}\leq 1$
	for all $B_1\in\mathcal{M}$.
	
	Note that $\sum_{B_1\in\mathcal{M}} a_{B_1}=\RC(f)$,
	so if we divide both sides by $\RC(f)$, the last line
	becomes a weighted average. It follows that there exists
	some minimal block $B_1$ such that
	\begin{align*}
	\RC(f)-1 &\leq \frac{2}{1-4\epsilon}
	\sum_{B_2\neq B_1}a_{B_2}
	\sum_{p\subseteq y^{B_1}:\,p\nsubseteq y,
		\,p\nsubseteq y^{B_1\cup B_2}} w_p\\
	&= \frac{2}{1-4\epsilon}
	\sum_{p\subseteq y^{B_1}:\,p\nsubseteq y} w_p
	\sum_{B_2\neq B_1:p\nsubseteq y^{B_1\cup B_2}} a_{B_2}.
	\end{align*}
	Examine the inner summation above.
	Note that $y^{B_1\cup B_2}=(y^{B_1})^{B_2\setminus B_1}$.
	Since $p\subseteq y^{B_1}$, the condition
	$p\nsubseteq y^{B_1\cup B_2}$ is equivalent to the support
	of $p$ having non-empty intersection with $B_2\setminus B_1$.
	Using $\supp(p)$ to denote the support of $p$, we have
	\begin{align*}
	\RC(f)-1 &\leq \frac{2}{1-4\epsilon} 
	\sum_{p\subseteq y^{B_1}:\,p\nsubseteq y} w_p
	\sum_{i\in \supp(p)\setminus B_1}
	\sum_{B_2\in\mathcal{M}:\,i\in B_2} a_{B_2}\\
	&\leq \frac{2}{1-4\epsilon} 
	\sum_{p\subseteq y^{B_1}:\,p\nsubseteq y} w_p
	\sum_{i\in \supp(p)\setminus B_1} 1\\
	&= \frac{2}{1-4\epsilon} 
	\sum_{p\subseteq y^{B_1}:\,p\nsubseteq y} w_p
	|\supp(p)\setminus B_1|\\
	&\leq \frac{2}{1-4\epsilon}
	\sum_{p\subseteq y^{B_1}:\,p\nsubseteq y} w_p (|p|-1)\\
	&\leq \frac{2}{1-4\epsilon}\aavdegpmin{,\epsilon}(f)-
	\frac{2}{1-4\epsilon}
	\sum_{p\subseteq y^{B_1}:p\nsubseteq y} w_p\\
	&\leq \frac{2}{1-4\epsilon}\aavdegpmin{,\epsilon}(f)-
	\frac{2}{1-4\epsilon}\left(
	\sum_{p\subseteq y^{B_1}} w_p-\sum_{p\subseteq y} w_p
	\right)\\
	&\leq \frac{2}{1-4\epsilon}\aavdegpmin{,\epsilon}(f)-
	\frac{2}{1-4\epsilon}(1-\epsilon-\epsilon)\\
	&\leq \frac{2}{1-4\epsilon}\aavdegpmin{,\epsilon}(f)-
	\frac{2-4\epsilon}{1-4\epsilon},
	\end{align*}
	where the second line follows because the sum of $a_B$ over
	all blocks $B\in\mathcal{M}$ containing a given element
	$i\in[n]$ is at most $1$, and the fourth line follows
	because the conditions $p\subseteq y^{B_1}$ and
	$p\nsubseteq y$ imply that the support of $p$ is not disjoint
	from $B_1$. Finally, we get
	\[\RC(f)\leq \frac{2}{1-4\epsilon}\aavdegpmin{,\epsilon}(f)
	-\frac{1}{1-4\epsilon}
	=\frac{2\aavdegpmin{,\epsilon}(f)-1}{1-4\epsilon},\]
	as desired.
\end{proof}

\bibliographystyle{alphaurl}
\newcommand{\eprint}[1]{{\small \upshape \tt \href{http://arxiv.org/abs/#1}{#1}}}
\let\oldpath\path
\renewcommand{\path}[1]{\small\oldpath{#1}}
\bibliography{sensitivity}
\end{document}